\newtheorem{theorem}{Theorem}[section]
\newtheorem{prop}{Proposition}[section]
\newtheorem{corollary}[prop]{Corollary}
\theoremstyle{definition}
\begin{document}
\title{Ergodic optimal liquidations in DeFi}

\author{Jialun Cao$^1$}
\email{Galen.Cao@ed.ac.uk}

\author{David \v{S}i\v{s}ka$^{1,2}$}
\email{D.Siska@ed.ac.uk}

\address{$^1$School of Mathematics, University of Edinburgh, United Kingdom}
\address{$^2$Vega Protocol, \href{https://github.com/vegaprotocol/}{\texttt{https://github.com/vegaprotocol/}}}

\date{\today}

\begin{abstract}
We address the liquidation problem arising from the credit risk management in decentralised finance (DeFi) by formulating it as an ergodic optimal control problem.
In decentralised derivatives exchanges, liquidation is triggered whenever the parties fail to maintain sufficient collateral for their open positions. 
Consequently, effectively managing and liquidating disposal of positions accrued through liquidations is a critical concern for decentralised derivatives exchanges.
By simplifying the model (linear temporary and permanent price impacts, simplified cash balance dynamics), we derive the closed-form solutions for the optimal liquidation strategies, which balance immediate executions with the temporary and permanent price impacts, and the optimal long-term average reward.
Numerical simulations further highlight the effectiveness of the proposed optimal strategy and demonstrate that the simplified model closely approximates the original market environment. 
Finally, we provide the method for calibrating the parameters in the model from the available data.  

\end{abstract}

\keywords{Ergodic control, Optimal execution, Decentralised finance, Exchanges risk management}

\subjclass[2020]{Primary 93E20; Secondary 91G80}

% 93E20, % Optimal stochastic control 
% 91G80 % Financial applications of other theories 
% 49L20 (Dynamic Programming and Optimal Control
% 49L12 % Hamilton-Jacobi equations in optimal control and differential games
% 37A30 % Ergodic theorems, spectral theory, Markov operators {For operator ergodic theory, see mainly 47A35}

\maketitle

%\tableofcontents

\section{Introduction}

Credit risk, the possibility that a counterparty fails to fullfill their obligations in a contract, is a critical concern in both decentralised derivatives exchanges and lending protocols within the decentralised finance (DeFi) ecosystem. 
Without effective credit risk management, defaults can occur that threaten the solvency and stability of exchanges and protocols.

In decentralised derivatives exchanges,
traders must maintain sufficient collateral, a.k.a. margin, to mitigate credit risk in a trust-minimised environment. 
The minimum margin level is set so as to ensure that mark-to-market losses from leveraged or short positions can be covered with very high probability~\cite{siska2019margins}.
If there is a larger loss the network insurance pool is utilised and once that is depleted the full mark-to-market gains of parties on the winning side of a market move cannot be made (often referred to as loss socialisation).
Margin requirements are continuously monitored and if a party’s margin level falls below a specified threshold, for example the ``maintenance margin level''~\cite{vegaprotocol2020whitepaper} or ``maintenance margin percentage''~\cite{juliano2018dydx}, they are considered distressed and, depending on the decentralised exchange design, the position is transferred to the exchange itself and collateral in the margin account is confiscated 
or the exchange makes an immediate trade against the available volume to close (reduce the position)
or the position is open for liquidation by third parties.
Any third party can then bid to take over the position together with the collateral.
See~\cite[Section 6.1]{vegaprotocol2020whitepaper} and~\cite[Section 3.1]{juliano2018dydx} for the detailed mechanism designs.

For decentralised lending protocols, such as Aave~\cite{aave2020whitepaper}, Compound~\cite{compound2019whitepaper} and MakerDAO~\cite{makerdao_whitepaper}, credit risk arises when the value of the borrower's collateral drops below the loan value. 
To protect the lenders and maintain the solvency of the protocols, all loans are  over-collateralised at inception, i.e. the current mark-to-market value of the collateral value exceeds the loan amount.
However, due to the price fluctuations, the collateral may decrease in value and the debt amount increases with interest, leading to insolvency risk.
When the collateral value below the liquidation ratio, see~\cite[Liquidations]{aave2020whitepaper},
the collateral is open to liquidation by third parties.
The details vary by protocol. 
MakerDAO uses a Dutch auction to offer the collateral and debt for liquidation, while most lending protocols, like Aave and Compound, allow external liquidators to repay the debt in return for collateral at a discounted rate or with a liquidation bonus as incentives~\cite{qin2021empirical}. 

There is strong evidence, both theoretical and empirical that the design where third parties get bonus for taking over a distressed position are open to manipulation, termed ``adversarial liquidations''.
We refer to Cohen et al.~\cite{cohen2023paradox} that analyse the occurrence of adversarial liquidations, such as front-running or price manipulation, in the system and to Perez et al.~\cite{perez2021liquidations} and Qin et al.~\cite{qin2021empirical} for the
empirical analysis of liquidation risks in main decentralised lending protocols.  

To avoid adversarial liquidations the exchange itself has to take on the position and then either immediately dispose of it against available volume or hold onto it with the view of disposing of it gradually later. 
Immediate disposal is problematic for at least three reasons. 
First, it relies on liquidity which may not be present to handle large liquidations. 
Second, even modestly sized liquidation can lead to a large price move. 
This, depending on the mark-to-market methodology could make other parties' distressed.
Three, if the market is in an auction mode it may not execute trades immediately.
Hence a better design is to let the network dispose of accrued positions gradually.
This leads to the following natural question.

{\em How to design an optimal strategy which will minimise risks for the network and maximise the gains from disposing of the inventory acquired, taking into account that future liquidation event may significantly change the inventory? }

In order to answer this question:
\begin{enumerate}[i)]
\item We set up a stochastic model and an ergodic control problem for the real world problem described. 
\item We prove that the optimal strategy is to always dispose of a fraction of the inventory per unit of time. 
\item We give a precise expression for this fraction in terms of the model parameters and we also give analytic expression for the average earning per unit time.  
\item Finally, we show that all the model parameters can be easily estimated from available data and the only free parameter is risk-aversion.
\end{enumerate}

\subsection{Existing literature}
The ergodic optimal liquidation problem we will formulate is new. 
However, optimal liquidation (or acquisition) of a fixed position (amount) 
has received significant attention in the literature.
As a typical multi-stage decision problem, the optimal execution problem was initially formulated within the framework of stochastic control theory by Bertsimas and Lo \cite{bertimas1999optimal}, and Almgren and Chriss \cite{almgren2001optimal}.
This work was further extended into a continuous-time setting by 
Huberman and Stanzl~\cite{huberman2005optimal} and He and Mamaysky~\cite{he2005dynamic}. 
Since then, the framework has been extensively studied and enhanced to incorporate various realistic features, including temporary and permanent price impacts, stochastic price impact dynamics, net order flow in asset price dynamics, liquidation with different types of orders and so on, see~\cite{he2005dynamic, cartea2015algorithmic, cartea2016incorporating, gueant2016financial, barger2019optimal} and the references therein.

To address the optimal liquidation challenge in decentralised derivatives exchanges and liquidators, this paper adopts an ergodic optimal control framework, which was first studied by Arisawa and Lions in their seminal work~\cite{arisawa1998ergodic}. 
Additionally, we refer to~\cite{barles2001space, fujita2006asymptotic, ichihara2012large, ley2014large} for research on various types of Hamilton--Jacobi--Bellman (HJB) equations, either semi-linear or non-linear, in the ergodic setting, and to~\cite{gueant2020optimal, cao2024logarithmic} for further work on the continuous-time problems with discrete state spaces. 

In this paper, we primarily focus on the liquidation problem faced by decentralised derivatives exchanges; however, the model can also be adapted for use in lending protocols with some modifications: for example, rather than interpreting $r$ as the inverse of leverage, $r$ could represent the liquidation bonus percentage  in the cash process~\eqref{eq:cash_balance} introduced later.

\subsection{Ergodic formulation of the optimal liquidation problem}

To formulate this liquidation problem, we consider that the exchange (agent) seeks to maximise the long-term average reward from disposal of inventory while managing the risks associated with holding position over time.

The trading speed, controlled by the agent, is denoted by $(\nu_t)_{t \geq 0}$ taking values in $\mathbb R$, representing the volume of market orders executed per unit time. 
The profit or loss from holding a position is determined by $(S_t^\nu)_{t\geq 0}$ which is the mid-price (and mark-to-market) process driven by a standard Brownian motion $(W_t)_{t \geq 0}$.
It is subject to permanent price impact $g(\nu_t)$ due to trades our agent executes.
When the agent trades they trade at $(\hat S_t^\nu)_{t\geq 0}$ which is the execution price, defined as the midprice adjusted by subtracting a temporary price impact $f(\nu_t)$.
This temporary price impact reflects the bid/ask spread as well ``slippage'' or ``walking the book'' when executing larger orders.
At any time $t\geq 0$ the agent (exchange) holds inventory $(Q_t)_{t\geq 0}$ which is again an $\mathbb R$-valued process. 
We propose the following controlled dynamics
\begin{align}
dS^{\nu}_t & = - g(\nu_t) dt + \sigma dW_t, \quad S_0^{\nu} = S \, , \label{eq:midprice} \\
\hat S_t^{\nu} & = S_t^{\nu} -  f(\nu_t), \label{eq:execution_price} \\ 
d Q_t^{\nu} & = - \nu_t dt + \zeta^{+}_{N_t^{+}} dN_t^{+} - \zeta^{-}_{N_t^{-}} dN_t^{-}, \quad  Q_0^{\nu}=q \, , \label{eq:open_positions}  \\
d X_t^\nu & = \hat S_{t}^{\nu} \nu_t dt + r \zeta^{+}_{N_t^{+}} S_{t-}^{\nu}dN_t^{+}+ r \zeta^{-}_{N_t^{-}}S_{t-}^{\nu} dN_t^{-}, \quad X_0^\nu = x \, . \label{eq:cash_balance}
\end{align}
Here $(X^\nu_t)_{t\geq 0}$ is the cash balance of the agent.
On top of the effect of executed trades due to  $(\nu_t)_{t \geq 0}$ the cash balance and inventory are additionally impacted by liquidation events.
Theses processes are governed by $N_t^{\pm}$, two independent Poisson processes, also independent of $(W_t)_{t \geq 0}$, with intensities $\lambda^\pm$.
Additionally, the sizes of the long / short positions transferred from distressed parties to the exchanges when a liquidation event occurs are modelled by $(\zeta^+_k)_{k\in \mathbb N}$ and $(\zeta^-_k)_{k \in \mathbb N}$, respectively. 
We assume that $\zeta^\pm_k$ are independent, identically distributed random variables (also independent of $(N_t^\pm)_{t\geq 0}$ and $(W_t)_{t\geq 0}$  with $\mathbb E[\zeta^\pm_k] = \eta^\pm$ and $\mathbb E[|\zeta^\pm_k|^2] < + \infty$ for $k\in \mathbb N$. 
Furthermore, whenever a liquidation occurs, the distressed trader's collateral is transferred to the exchange's market insurance pool. This results in a cash inflow of $r \zeta^{\pm} S_t^{\nu}$ in the cash process, where $r$ denotes the inverse of the leverage ratio set by the exchange.

The exchange’s objective is to find the optimal strategy, in the class admissible strategies which will be made precise later, to maximise the average earnings per unit time subject to a quadratic inventory penalty scaled by $\phi>0$ expressing their risk aversion:
\begin{align}
J(x, S, q; \nu) & = \lim_{T \rightarrow + \infty}  \frac{1}{T} \mathbb{E}_{q, x, S} \bigg[ \int_0^T \mathrm{d} (X^{\nu}_t  + S_t^{\nu} Q_t^{\nu}) - \phi \int_0^T (Q^{\nu}_t)^2 \, \mathrm{d} t \bigg]\, , \label{eq:objective_function}  \\
\gamma & = \sup_{\nu} J(x, S, q; \nu) \, . \label{eq:gamma_definition}
\end{align}
Here $\mathbb{E}_{q, x, S} [\cdot]$ represents expectation conditional on $Q_0 = q, X_0 = x, S_0 = S$. 
The optimal earnings per unit time are denoted $\gamma$ and will be referred to as the ergodic constant. 

In order to obtain a closed-form solution, we make a number of simplifications which we will now discuss. 
\begin{enumerate}[i)]
\item \label{simplification1} We will assume that the parties don't update the initial margin balance and thus replace~\eqref{eq:cash_balance} by
\begin{equation} \label{simplified cash process}
d X_t^\nu = \hat S_t^{\nu} \nu_t dt + r \zeta^{+}S_0 dN_t^{+}+ r \zeta^{-}S_0 dN_t^{-}, \quad X_0^\nu = x, 
\end{equation}
where $S_0 \in \mathbb R^+$ is the initial mid price. 
\item \label{simplification2} We assume that liquidation intensities and sizes are symmetric for long and short positions i.e. that  $\lambda^+ = \lambda^- =: \lambda$ and $\eta^+ = \eta^- =: \eta$. 
\item \label{simplification3} We will assume linear price impact functions, specifically that temporary price impact function is $f(\nu) = k\nu$ some constant $k\in \mathbb R^+$ and that the permanent price impact is $g(\nu) = b\nu$ for some constant $b\in \mathbb R^+$. 
\end{enumerate}
Empirical and statistical evidence~\cite{cartea2015algorithmic} support the choice of linear temporary price impact.
However some studies~\cite{said2017market} argue that the temporary price impact conforms is better represented as a concave function, for example $\nu \mapsto k\sqrt{\nu}$.
We believe analytic tractability is more important which is why we choose linear temporary price impact. 
The empirical analysis in the work \cite{cartea2015algorithmic, said2017market} provides some justification for modelling the permanent price impact can also be described as a linear function of the trading rate. 

Based on the simplifications and assumptions, we derive the closed-form expression for $\gamma$, see Theorem~\ref{ergodic constant from discounted infinite}, as
\begin{equation} 
\label{form of ergodic constant}
\gamma = 2 r \lambda \eta S_0 - \lambda \eta^2 b  - 2 \lambda \eta^2 \sqrt{k\phi} \, , 
\end{equation}
which demonstrates that the ergodic constant $\gamma$ is independent of any initial conditions and is determined solely by the market parameters. 
The ergodic optimal Markov control is given by 
\begin{equation} \label{feedback ergodic control}
\nu^{*}(q) = \sqrt{\frac{\phi}{k}} q\, ,
\end{equation}
i.e. that the optimal strategy is indeed a fraction of the current position and the coefficient is determined by the parameters, see Corollary~\ref{ergodic control}.
Note that while the optimal average earnings per unit time depend on all the model parameters, the optimal strategy only needs the risk aversion $\phi>0$ and temporary price impact $k\in \mathbb R^+$. 
This means that the optimal strategy is robust to parameter misspecification; only the temporary price impact has to be estimated correctly.
In Section~\ref{appendix B} we discuss how all the parameters can be obtained from available data.

\subsection{Our contribution}
In this paper, we formulate and solve the ergodic optimal liquidation problem arising from the credit risk management in DeFi. 
Our framework provides simple formulae allowing derivatives exchanges optimise the profit and loss of their insurance pool through optimal liquidation of positions from distressed parties.
Since the strategy is designed to minimise price impact the beneficial side-effect is to minimise market moves caused by liquidations. 

Through numerical simulations, we demonstrate that the proposed optimal strategy significantly outperforms sub-optimal strategies.
Additionally, we perform sensitivity analysis $\gamma$ with respect to the key model parameters. 
Furthermore, we simulate the performance by using the optimal control under both simplified and original market environments, showing that the simplified model closely approximates the more complex one reality while maintaining the convenience of providing closed-form solutions for the strategy and ergodic constant. 

\section{Results}
The probability space for the liquidation problem is 
\begin{equation} \label{probability space}
(\Omega, \mathcal{F}, \mathbb{P}) = \big( 
\Omega^{W} \times \Omega^{N} \times \Omega^\zeta, 
\mathcal{F}^{W} \otimes \mathcal{F}^{N} \otimes \mathcal{F}^\zeta,  \mathbb{P}^{W} \otimes \mathbb{P}^{N} \otimes\mathbb{P}^\zeta
\big) \, , 
\end{equation}
where the space $(\Omega^{W}, \mathcal{F}^{W}, \mathbb{P}^{W})$ supports the Brownian motion $(W_t)_{t \geq 0}$ that drives the mid price process $(S_t^{\nu})_{t \geq 0}$. 
The space $(\Omega^{N}, \mathcal{F}^{N}, \mathbb{P}^N)$ supports the two independent Poisson processes $(N_t^{\pm})_{t \geq 0}$. 
The space $(\Omega^\zeta, \mathcal{F}^\zeta, \mathbb P^\zeta)$ supports the two iid sequences $(\zeta^+_k)_{k\in \mathbb N}$ and $(\zeta^-_k)_{k \in \mathbb N}$ with $\mathbb E[\zeta^\pm_k] = \eta^\pm$ and $\mathbb E[|\zeta^\pm_k|^2] < + \infty$ for $k\in \mathbb N$. 
The filtration is $\mathbb F := (\mathcal F_t)_{t\geq 0}$, where $\mathcal F_t = \sigma(W_r; r\leq t) \vee \sigma(N_r^\pm; r\leq t)\vee \sigma(\zeta_{N^+_r}^+,\zeta_{N^-_r}^-;r\leq t)$.
We will now define the class of admissible controls in two steps. 
First, for any $T>0$, let
\begin{equation*}
\begin{split}
\mathcal{A}[0, T] := \Big\{ \nu : [0,T]  & \times \Omega \to \mathbb R | \, \nu  \text{ is $\mathbb F$-progressively measurable, $\mathbb E \int_0^T |\nu_t|^2 \, \mathrm{d} t  < + \infty$, }   \Big\}\,,
\end{split}
\end{equation*}
The admissible controls for the ergodic problem are 
$\mathcal{A} := \bigcap_{T > 0} \mathcal{A}[0, T]$.

\subsection{Existence of ergodic constant and optimal Markov control}
Recall that for a given control the ergodic reward functional $J: \mathbb R^3\times\mathcal A \to \mathbb R$ is defined by~\eqref{eq:objective_function}. 
Our aim is give a characterisation of the optimal long-run average reward, also known as the ergodic constant, $\gamma$ defined by~\eqref{eq:gamma_definition} and to construct an optimal feedback (Markov) control $\nu^{*}  \in \mathcal{A}[0, \infty)$.

Let us define the running reward function $F: \mathbb R^2 \to \mathbb R$ as 
\begin{equation} \label{running reward function}
\begin{split}
F(q, \nu)  =  - k \nu^2 - b\nu q - \phi q^2 + 2 \lambda \eta r S_0 \,.
\end{split}
\end{equation}
Under the assumptions and simplifications discussed above, the ergodic optimal liquidation problem can be reduced to the following form, see Section~\ref{Dimension reduction under model simplifications}, 
\begin{equation} \label{reduced ergodic control problem}
J(q; \nu) = \lim_{T \rightarrow + \infty}  \frac{1}{T} \mathbb{E}_{q} \bigg[  
\int_0^T F(Q_t^\nu, \nu) \, \mathrm{d} t 
\bigg]\, , \quad \gamma = \sup \Big\{ J(q;\nu): \nu \in \mathcal{A} \Big\} \, .
\end{equation}

We will solve the ergodic problem by solving the finite time horizon problem first.
To that end we introduce a quadratic penalty $G(q): \mathbb R \to \mathbb R^+$ for any position $q \in \mathbb R$ held at the terminal time given, for $q\in \mathbb R$, by
\begin{equation} \label{terminal}
G(q) = - \alpha q^2 \, .    
\end{equation}
where $\alpha > 0$ is the terminal penalty parameter. 
The value function $v: [0,T] \times \mathbb R \to \mathbb R$ for the finite-time-horizon problem is 
\begin{equation} \label{finite time model}
v(t, q; T) = \sup_{\nu \in \mathcal{A}[t, T]} \mathbb E_{t, q} \Big[ \int_t^T F(Q_s^{\nu}, \nu) \,  \mathrm d s +  G(Q_T^{\nu}) \Big]  \, ,
\end{equation}
where $F$ is the running reward function given by~\eqref{running reward function}. 
We will also need discounted infinite-time-horizon optimal value function
\begin{equation} \label{discounted infinite model}
v_\beta(q) = \sup_{\nu \in \mathcal{A}} \mathbb E \Big[
\int_0^{+\infty} e^{-\beta t} F(Q_t^\nu, \nu_t) \mathrm{d} t
\Big] \, ,
\end{equation}
with the discount factor $\beta > 0$. 

The well-posedness of the optimal liquidation problems in either finite-time-horizon or discounted infinite-time-horizon setting are discussed in Sections~\ref{proof of prop: solution to the finite model} and~\ref{proof theorem: existence of discounted infinite}. 
To analyse the ergodic optimal liquidation problem~\eqref{reduced ergodic control problem}, following the general method used in works such as~\cite{arisawa1998ergodic, cao2024logarithmic}, we let $\beta \rightarrow 0$ in the discounted infinite-time-horizon control problem~\eqref{discounted infinite model} and $T \rightarrow + \infty$ in the finite-time-horizon control problem~\eqref{finite time model}. We then demonstrate that the value $\beta v_{\beta}$ (and respectively, $\frac{1}{T} v(\cdot; T)$) converges to a constant independent of the  initial state, which is indeed the ergodic constant, as stated in the following theorem. The proof is provided in Section~\ref{proof theorem: ergodic constant from discounted infinite}.
\begin{theorem} \label{ergodic constant from discounted infinite}
There exists a constant $\hat \gamma \in \mathbb R$ such that, for all $q \in \mathbb R$, 
\begin{equation*}
\hat \gamma = \lim_{\beta \rightarrow 0} \beta v_{\beta}(q) = \lim_{T \rightarrow + \infty} \frac{1}{T} v(0,q; T) \,, 
\end{equation*}
where $v_\beta$ is given by~\eqref{discounted infinite model} and $v$ is given by~\eqref{finite time model}. 
Moreover, we have 
\begin{equation*}
\gamma = \hat \gamma = 2 r \lambda \eta S_0 - \lambda \eta^2 b  - 2 \lambda \eta^2 \sqrt{k\phi} \, , 
\end{equation*}
where $\gamma$ is the ergodic constant given~\eqref{reduced ergodic control problem}.
\end{theorem}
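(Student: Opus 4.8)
The plan is to exploit the linear--quadratic structure of the reduced problem~\eqref{reduced ergodic control problem}, together with the symmetry of the two liquidation streams, to solve the finite-horizon and discounted problems in closed form and then to pass to the limits $T\to+\infty$ and $\beta\to 0$. For each auxiliary problem I would postulate a value function quadratic in the inventory, reduce the associated Hamilton--Jacobi--Bellman (HJB) equation to a scalar Riccati equation for the quadratic coefficient, and show that this coefficient converges, in both limits, to the same steady-state root; the ergodic constant is then read off from the zeroth-order coefficient. The observation that makes this tractable, which I would record first, is that the nonlocal operator generated by the jumps in~\eqref{eq:open_positions}, applied to any quadratic $w(q)=aq^2+cq+d$, has all of its $q$-dependent contributions cancel, because $\lambda^{+}=\lambda^{-}=\lambda$ and $\eta^{+}=\eta^{-}=\eta$; what survives is the constant $2\lambda a\,\mathbb E[(\zeta^{+})^2]$. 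This decouples the HJB into scalar pieces.

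For the finite-horizon value function~\eqref{finite time model} I would substitute the ansatz $v(t,q;T)=a(t)q^2+c(t)q+d(t)$ into its HJB equation with terminal data from~\eqref{terminal}. The inner maximisation over $\nu\in\mathbb R$ is explicit, with maximiser $\nu^{*}(t,q)=-(\partial_q v+bq)/(2k)$, and matching powers of $q$ produces a decoupled system: a Riccati equation $\dot a=\phi-(2a+b)^2/(4k)$ with $a(T)=-\alpha$; a homogeneous linear equation for $c$ whose zero terminal value forces $c\equiv 0$; and an equation for $d$ driven by the constant of~\eqref{running reward function} together with the surviving jump term $2\lambda a\,\mathbb E[(\zeta^{+})^2]$. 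The Riccati equation has equilibria at $2a+b=\pm 2\sqrt{k\phi}$; a phase-line analysis of the backward flow, combined with the requirement that $v$ be concave in $q$ (i.e.\ $a<0$), selects the attracting root $\bar a=-\tfrac{b}{2}-\sqrt{k\phi}$, so that $a(0;T)\to\bar a$ as $T\to+\infty$. Since $a(0;T)$ stays bounded while $d(0;T)$ grows linearly, dividing~\eqref{finite time model} by $T$ annihilates the $q$-dependent term and identifies $\lim_{T\to+\infty}\tfrac1T v(0,q;T)$ with the claimed constant $\hat\gamma$ of~\eqref{form of ergodic constant}, independently of $q$.

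For the discounted problem~\eqref{discounted infinite model} I would repeat the ansatz with time-independent coefficients, so that the Riccati equation degenerates to the algebraic relation $\beta a_\beta=\phi-(2a_\beta+b)^2/(4k)$, again with $c_\beta=0$ and with $\beta d_\beta$ fixed by the same constant plus $2\lambda a_\beta\,\mathbb E[(\zeta^{+})^2]$. Selecting the root with $a_\beta<0$, one checks $a_\beta\to\bar a$ as $\beta\to 0$, whence $\beta v_\beta(q)=\beta a_\beta q^2+\beta d_\beta\to\hat\gamma$, matching the long-time limit. To finish I would identify $\gamma=\hat\gamma$ using $w(q)=\bar a q^2$ as a corrector: it solves the stationary ergodic HJB equation with constant $\hat\gamma$, and its maximiser is precisely the Markov control~\eqref{feedback ergodic control}, $\nu^{*}(q)=\sqrt{\phi/k}\,q$. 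Applying Dynkin's formula to $w(Q_t^{\nu})$ and dividing by $T$ yields $J(q;\nu)\le\hat\gamma$ up to the remainder $\tfrac1T\mathbb E[w(Q_T^{\nu})]$; under~\eqref{feedback ergodic control} the controlled inventory is a stable jump Ornstein--Uhlenbeck process with finite stationary second moment, so this remainder vanishes and $J(q;\nu^{*})=\hat\gamma$, giving $\gamma\ge\hat\gamma$ and hence equality.

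I expect the main obstacle to be the upper bound $\gamma\le\hat\gamma$ rather than the explicit ODE computations. Because $\bar a<0$, the corrector $w=\bar a q^2$ is concave and the remainder $-\tfrac1T\mathbb E[w(Q_T^{\nu})]=\tfrac{|\bar a|}{T}\mathbb E[(Q_T^{\nu})^2]$ carries the unfavourable sign, so a crude estimate only gives $J(q;\nu)\le\hat\gamma+\limsup_{T}\tfrac{|\bar a|}{T}\mathbb E[(Q_T^{\nu})^2]$. The point I would argue carefully is that any admissible control for which $\tfrac1T\mathbb E[(Q_T^{\nu})^2]$ fails to vanish must let the inventory grow, so that the penalty $-\phi\int_0^T (Q_t^{\nu})^2\,\mathrm d t$ drives $J(q;\nu)$ to $-\infty$; hence the supremum in~\eqref{reduced ergodic control problem} can be restricted to controls obeying the transversality condition $\tfrac1T\mathbb E[(Q_T^{\nu})^2]\to 0$, for which the corrector bound does deliver $J(q;\nu)\le\hat\gamma$. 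A secondary technical point is to justify, via the same phase-line and sign analysis, that the stable Riccati root is selected in both limits and that the convergences $a(0;T)\to\bar a$ and $a_\beta\to\bar a$ are uniform enough to pass to the limit in the zeroth-order coefficient.
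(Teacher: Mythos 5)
Your treatment of the two limits $\lim_{\beta\to 0}\beta v_\beta(q)$ and $\lim_{T\to+\infty}\tfrac1T v(0,q;T)$ is in substance the paper's own proof: quadratic ansatz, a Riccati equation for the $q^2$-coefficient whose stable root $\bar a=-\tfrac b2-\sqrt{k\phi}$ is selected, the linear coefficient forced to zero, and the constant read off the zeroth-order term. The paper integrates the Riccati ODE in closed form where you use a phase-line argument; that difference is cosmetic. (One bookkeeping point: the jump operator applied to $aq^2$ leaves $\lambda a\bigl(\mathbb E[(\zeta^+)^2]+\mathbb E[(\zeta^-)^2]\bigr)$, as you correctly write; the paper records this as $2\lambda a\eta^2$, so your constant reproduces~\eqref{form of ergodic constant} only if one identifies $\mathbb E[(\zeta^\pm)^2]$ with $\eta^2$, i.e.\ deterministic jump sizes.) Where you genuinely depart from the paper is the identification $\gamma=\hat\gamma$: the paper does not prove this step at all, citing~\cite{cao2024logarithmic}, whereas you attempt a corrector/Dynkin verification. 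Your lower bound $J(q;\nu^*)=\hat\gamma$ is sound, since under~\eqref{feedback ergodic control} the inventory has uniformly bounded second moment and the corrector remainder vanishes.

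The gap is in your upper bound, exactly at the transversality step you flag. The dichotomy you propose --- ``if $\tfrac1T\mathbb E[(Q_T^\nu)^2]$ does not vanish, the penalty $-\phi\int_0^T(Q_t^\nu)^2\,\mathrm dt$ forces $J=-\infty$'' --- is not proved, and as stated it is false without an additional hypothesis. The obstruction is the cross term $-b\nu q$ in~\eqref{running reward function}: since $\mathrm d(Q_t^2)=-2\nu_tQ_t\,\mathrm dt+\text{jump terms}$, symmetry of the jumps gives
\begin{equation*}
-b\,\mathbb E\int_0^T\nu_tQ_t\,\mathrm dt
=\tfrac b2\,\mathbb E\bigl[Q_T^2\bigr]-\tfrac b2 q^2
-\tfrac{\lambda b}{2}\bigl(\mathbb E[(\zeta^+)^2]+\mathbb E[(\zeta^-)^2]\bigr)T\,,
\end{equation*}
so letting the inventory grow is \emph{rewarded} by precisely the boundary term you are trying to discard, and it must be shown to be dominated by the $\phi$- and $k$-costs. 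Concretely, if $b^2>4k\phi$ there is $c<0$ with $kc^2+bc+\phi<0$, and the admissible feedback $\nu_t=cQ_t$ makes $\mathbb E[F(Q_t,\nu_t)]$ grow exponentially, so $J=+\infty$ and the conclusion $\gamma=\hat\gamma$ fails (simultaneously $v_\beta=+\infty$ for small $\beta$, so the verification in Theorem~\ref{existence of discounted infinite} also silently needs transversality). Thus any correct proof requires $b^2\le 4k\phi$ --- which makes $-k\nu^2-b\nu q-\phi q^2$ negative semidefinite, hence $F\le 2\lambda\eta rS_0$ --- or a restriction of the admissible class to stabilising controls; neither your proposal nor the paper states this. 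Even granting $b^2\le4k\phi$, your heuristic needs repair: the jumps inject variance at a linear rate, so $\tfrac1T\mathbb E[Q_T^2]\not\to0$ does not mean the inventory ``grows'' (it already fails for $\nu\equiv0$). A cleaner route avoiding transversality altogether: for any admissible $\nu$ with $J(q;\nu)>-\infty$, write $\beta\,\mathbb E\int_0^\infty e^{-\beta t}F\,\mathrm dt=\int_0^\infty\beta^2Te^{-\beta T}\bigl(\tfrac1T\mathbb E\int_0^TF\,\mathrm dt\bigr)\,\mathrm dT$; as $\beta\to0$ this probability average concentrates at large $T$ and converges to $J(q;\nu)$, while the left-hand side is bounded by $\beta v_\beta(q)\to\hat\gamma$. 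This Abelian comparison with the discounted problem, combined with your lower bound, yields $\gamma=\hat\gamma$ and is the standard argument behind the result the paper cites.
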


The following corollary provides the expression for the optimal ergodic control, with the proof given in Section~\ref{proof prop optimal ergodic control}.
\begin{corollary} 
\label{ergodic control}
The optimal Markov control for the ergodic control problem~\eqref{reduced ergodic control problem} is 
\begin{equation*}
\nu^{*}(q) = \sqrt{\frac{\phi}{k}} q\, .
\end{equation*}
\end{corollary}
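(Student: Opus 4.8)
The plan is to construct a solution pair $(\gamma, h)$ of the ergodic Hamilton--Jacobi--Bellman (cell) equation associated with~\eqref{reduced ergodic control problem}, to read off the maximiser in feedback form, and then to confirm by a verification argument that this feedback attains the ergodic constant already identified in Theorem~\ref{ergodic constant from discounted infinite}. Writing $\mathcal L^\nu$ for the generator of $(Q_t^\nu)$ acting on a smooth test function $h$,
\begin{equation*}
\mathcal L^\nu h(q) = -\nu\, h'(q) + \lambda\, \mathbb E\big[h(q+\zeta^+) - h(q)\big] + \lambda\, \mathbb E\big[h(q-\zeta^-) - h(q)\big]\, ,
\end{equation*}
the ergodic HJB equation reads $\gamma = \sup_{\nu \in \mathbb R}\big\{ \mathcal L^\nu h(q) + F(q,\nu)\big\}$ with $F$ given by~\eqref{running reward function}. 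Since the exogenous liquidation jumps do not depend on the control, the supremum over $\nu$ involves only the terms $-\nu h'(q) - k\nu^2 - b\nu q$, so the first-order condition yields the candidate feedback
\begin{equation*}
\nu^{*}(q) = -\frac{h'(q) + b q}{2k}\, .
\end{equation*}

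First I would posit the quadratic ansatz $h(q) = A q^2 + C q + D$, motivated by the fact (established in the analysis of~\eqref{finite time model} and~\eqref{discounted infinite model}) that the finite-horizon and discounted value functions are quadratic in $q$. Substituting the ansatz, the symmetry assumption $\eta^+=\eta^-$ makes the $q$-linear contributions of the two jump terms cancel, leaving a purely $q$-independent jump contribution. Inserting $\nu^*(q)$ and demanding that the right-hand side be independent of $q$ (since it must equal the constant $\gamma$) forces the coefficients of $q^2$ and of $q$ to vanish, i.e.
\begin{equation*}
(2A+b)^2 = 4 k\phi \qquad\text{and}\qquad C = 0\, .
\end{equation*}
Of the two roots $2A + b = \pm 2\sqrt{k\phi}$ I would select $2A+b = -2\sqrt{k\phi}$, which makes $h$ concave; substituting $C=0$ into the feedback then gives exactly $\nu^*(q) = \sqrt{\phi/k}\, q$, confirming that the optimiser depends only on $\phi$ and $k$.

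The main obstacle is justifying the choice of root and, above all, carrying out the verification step: one must show that this feedback is admissible and optimal rather than merely a stationary point of the Hamiltonian. For admissibility I would observe that under $\nu^*(q)=\sqrt{\phi/k}\,q$ the closed-loop inventory obeys $dQ_t = -\sqrt{\phi/k}\,Q_t\,dt + \zeta^+_{N_t^+}dN_t^+ - \zeta^-_{N_t^-}dN_t^-$, a mean-reverting jump process whose second moment remains bounded in time (using $\mathbb E[|\zeta^\pm|^2]<\infty$), so that $\mathbb E\int_0^T|\nu_t^*|^2\,dt<\infty$ for every $T$ and hence $\nu^*\in\mathcal A$. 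The correct root is precisely the one yielding a strictly negative mean-reversion rate, i.e. a stable closed loop; the convex root would produce an explosive inventory and a non-admissible control, which is why the concave choice is the admissible one.

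To close the argument I would connect the candidate to Theorem~\ref{ergodic constant from discounted infinite} in either of two equivalent ways. One route is a direct ergodic verification: applying the Dynkin formula to $h(Q_t^\nu)$ along an arbitrary admissible $\nu$ shows $J(q;\nu)\le\gamma$ with equality exactly when $\nu=\nu^*$, using that $\tfrac1T\,\mathbb E[h(Q_T^\nu)]\to0$ by the bounded second moment. The alternative, and arguably cleaner given the paper's construction, is to note that the optimal feedback controls of the discounted problem~\eqref{discounted infinite model} and the finite-horizon problem~\eqref{finite time model} have the forms $-\big((2A_\beta+b)q+C_\beta\big)/(2k)$ and $-\big((2a(t)+b)q+c(t)\big)/(2k)$, and that the coefficients converge, $A_\beta\to A$ and $C_\beta\to0$ as $\beta\to0$ (respectively $a(t)\to A$ and $c(t)\to0$ as $T\to\infty$ for the Riccati coefficient $a$), so that $\nu^*(q)=\sqrt{\phi/k}\,q$ is recovered as the common limit.
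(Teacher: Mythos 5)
Your core construction coincides with the paper's own proof: both write the ergodic HJB equation, posit a quadratic ansatz for the relative value function, read off the feedback $\nu^*(q) = -(h'(q)+bq)/(2k)$ from the first-order condition, match coefficients to obtain $(2A+b)^2 = 4k\phi$ with zero linear term, and select the root $2A+b = -2\sqrt{k\phi}$ (the paper keeps $h_2 = -\sqrt{k\phi}-\tfrac12 b$) on admissibility grounds, with the same mean-reversion/moment-bound argument showing $\nu^*\in\mathcal A$. Where you diverge is in how optimality is closed. The paper does not attempt a direct ergodic verification: it justifies the ergodic HJB equation, and the identification of its constant with $\gamma$, through the vanishing-discount limit, using the boundedness of $\beta v_\beta(q)$ and of $v_\beta(q)-v_\beta(0)$ from Theorem~\ref{existence of discounted infinite} and citing~\cite{cao2024logarithmic} for that passage.

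Your proposed verification, however, has a genuine gap. The transversality claim that $\tfrac1T\,\mathbb E[h(Q_T^\nu)]\to 0$ ``by the bounded second moment'' is valid for the closed loop under $\nu^*$, but not for an arbitrary admissible $\nu$: membership in $\mathcal A$ only requires $\mathbb E\int_0^T|\nu_t|^2\,\mathrm{d}t<\infty$ for each $T$, and nothing prevents $\mathbb E[(Q_T^\nu)^2]$ from growing linearly or faster in $T$. Since your $h$ is concave, the Dynkin estimate gives
\begin{equation*}
\frac1T\,\mathbb E\int_0^T F(Q_t^\nu,\nu_t)\,\mathrm{d}t \;\le\; \gamma + \frac{h(q)}{T} + \frac{|A|\,\mathbb E[(Q_T^\nu)^2]-D}{T}\,,
\end{equation*}
so the error term enters with the unfavorable sign and the bound $J(q;\nu)\le\gamma$ does not follow for all admissible controls; one needs a supplementary argument that controls with superlinearly growing inventory are suboptimal because of the $-\phi q^2$ running penalty, or one must revert to the vanishing-discount machinery as the paper does. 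Your alternative route --- the limits $A_\beta\to A$, $C_\beta\to 0$ of the discounted and finite-horizon feedbacks --- does compute correctly (it recovers $\sqrt{\phi/k}\,q$), but convergence of the maximisers of the approximating problems identifies only the candidate; by itself it does not prove that this limit feedback is optimal for the ergodic problem, so it cannot replace the missing verification step.
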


\subsection{Numerical results}
In this section, we simulate the ergodic liquidation problem, compare the performance between the optimal ergodic control and not well-calibrated controls, and explore the impact of the model parameters on the ergodic constant $\gamma$ which indicates the average earnings / loss of the insurance pool per unit time. 

\begin{figure}
\centering
\includegraphics[height=150pt]{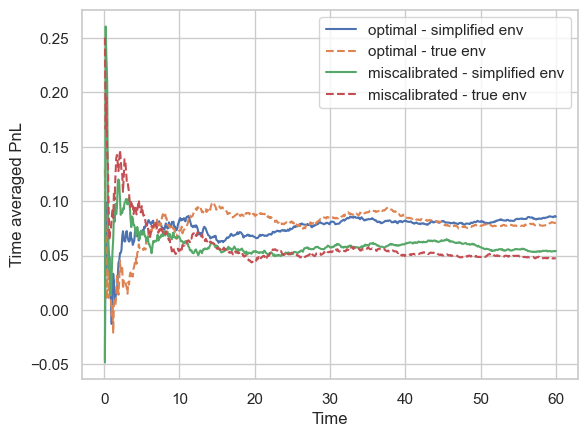}
\includegraphics[height=150pt]{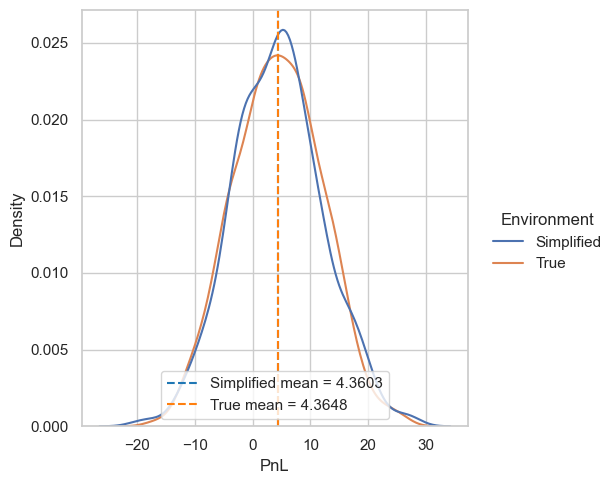}
\caption{Left: Comparison of the time-averaged PnL for optimal and mis-calibrated controls. 
Right: PnL distributions due to the optimal ergodic control~\eqref{feedback ergodic control} derived using simplified dynamics~\eqref{simplified cash process} under the true dynamics~\eqref{eq:cash_balance} - in yellow - and  simplified dynamics~\eqref{simplified cash process} - in blue.}
\label{pnl}
\end{figure}

We first run the simulations by employing the ergodic optimal control~\eqref{feedback ergodic control} and a not well-calibrated control under the simplified environment and original environment, respectively. 
The mis-calibrated control is simple: in each time period the exchange trades volume equal half of the current open positions in the direction which will reduce its position.
The simplified (or original) environment means that whether the cash process $(X_t)_{t \geq 0}$ is updated by the dynamics~\eqref{simplified cash process} (or~\eqref{eq:cash_balance}). 
The parameters in the simulation are set as follows: 
$\lambda^{\pm}=0.05/s$, $\zeta^{\pm} \sim \mathcal{N}(10, 0.5)$, $\sigma=0.5 s^{-1/2}\$$, $b = \$1 \times 10^{-5}$, $k = \$1 \times 10^{-3}$, $\phi = \$ 1 \times 10^{-4}$, $S_0 = \$10$ and $Q_0 = 0$.
Figure~\ref{pnl} (left penal) plots the achieved time-averaged PnL of each control in the simplified and original environment. Clearly, the ergodic optimal control gains a higher long-term average reward than that of using the heuristic (sub-optimal) control.

The right panel of Figure~\ref{pnl} plots the distributions of the achieved PnL after a long period by employing the ergodic control in
both simplified and original environments. The dashed vertical lines present the mean values of the PnL for each, indicating a negligible difference.
The resemblance between the distributions suggests that the simplified dynamics~\eqref{simplified cash process} provides a good approximation of the dynamics~\eqref{eq:cash_balance}.

\begin{figure}%[htp]
\centering
\includegraphics[height=3.8cm]{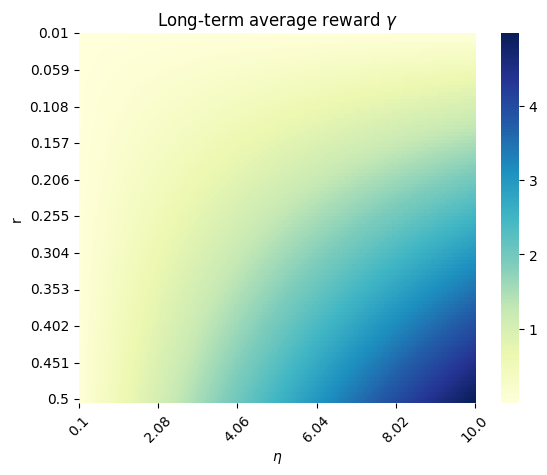}\quad
\includegraphics[height=3.8cm]{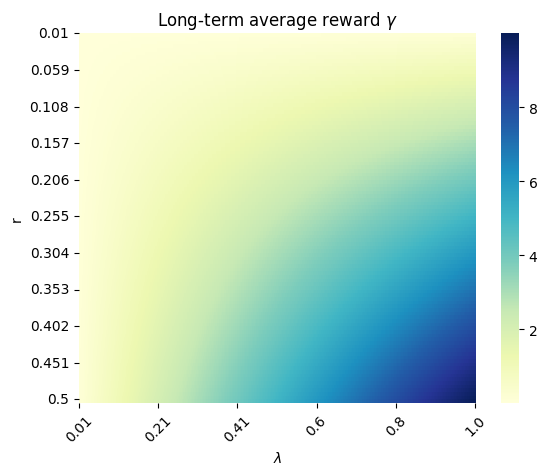}

\medskip
\includegraphics[height=3.8cm]{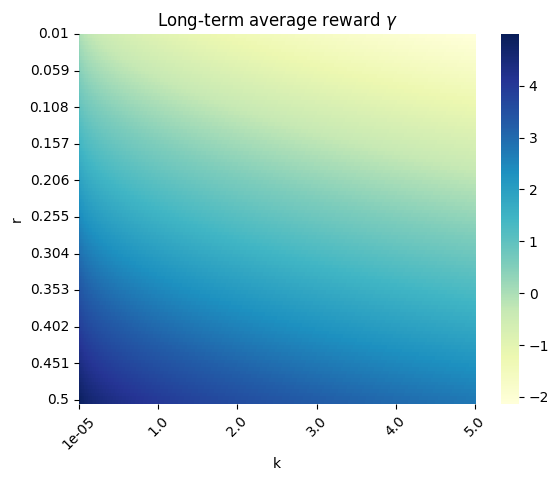}\quad
\includegraphics[height=3.8cm]{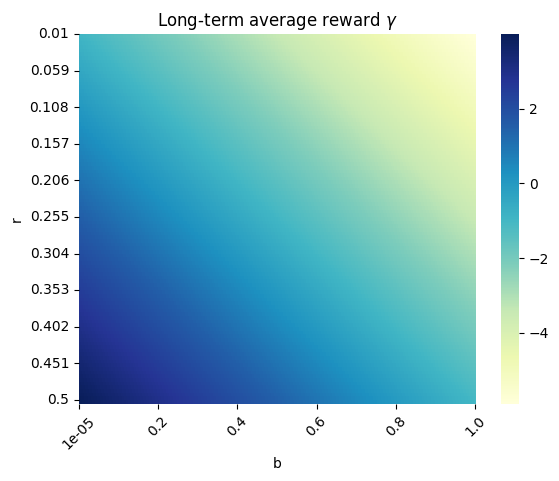}

\caption{
{
The long-term average reward $\gamma$ as a function of inverse of leverage $r$ and: average size of distressed positions $\eta$ (top left), intensity of liquidation occurrence $\lambda$ (top right), temporary price impact $k$ (bottom left), permanent price impact $b$ (bottom right) all from~\eqref{form of ergodic constant}.}
}
\label{model parameter}
\end{figure}

The long-term average reward $\gamma$ derived from the simplified model is influenced by several parameters, including the inverse of the leverage ratio $r$, the average size of distressed positions $\eta$, the intensity of liquidation occurrence $\lambda$, temporary price impact $k$ and permanent price impact $b$. 
Figure~\ref{model parameter} illustrates the behaviour of $\gamma$ as a function of the parameters.
From the figures, we can see that increasing $\eta$ and $\lambda$ tends to improve $\gamma$, indicating that both a larger average size of distressed positions and a higher frequency of liquidations lead to higher long-term rewards for the exchange as long as $r$ is sufficiently large. 
However, both temporary and permanent price impacts $k$ and $b$ negatively affect $\gamma$, which means that higher adverse market impacts reduce overall profitability when liquidating. 
Unsurprisingly, across all figures, higher leverage, i.e. lower $r$, diminishes the optimal average earnings, indicating that exchanges should manage leverage carefully to ensure the insurance pool has a healthy balance.

While~\eqref{form of ergodic constant} tells us that $\gamma$ derived using the simplified dynamics~\eqref{simplified cash process}  does not depend on the volatility $\sigma$ a question remains whether this is only a result of the simplification.
The left panel of Figure~\ref{fig:gamma_sigma} is a result of Monte-Carlo simulations using the control~\eqref{feedback ergodic control} under the original dynamics~\eqref{eq:cash_balance} to obtain the long run average earnings versus various values of $\sigma$ and $r$. 
The plot reveals that the volatility parameter $\sigma$ has no clear impact on the long-term average reward $\gamma$ even under the original dynamics~\eqref{eq:cash_balance}. 
This suggests that the closed-form expression of $\gamma$ derived under the simplified model, which indicates that volatility $\sigma$ does not directly affect $\gamma$, is a good approximation for the behaviour of the long-term average rewards in the original environment.

\begin{figure}
\centering
\includegraphics[height=3.8cm]{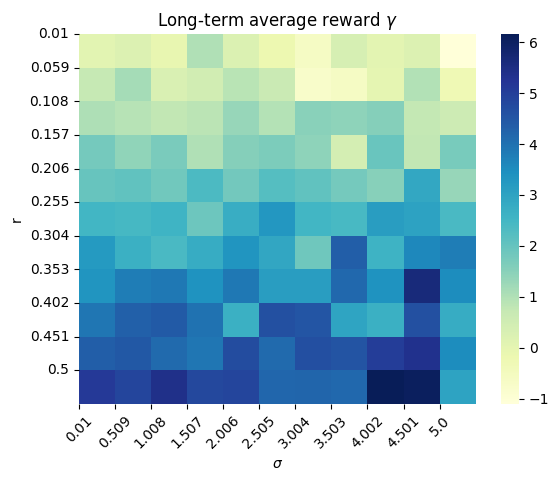}
\includegraphics[height=3.8cm]{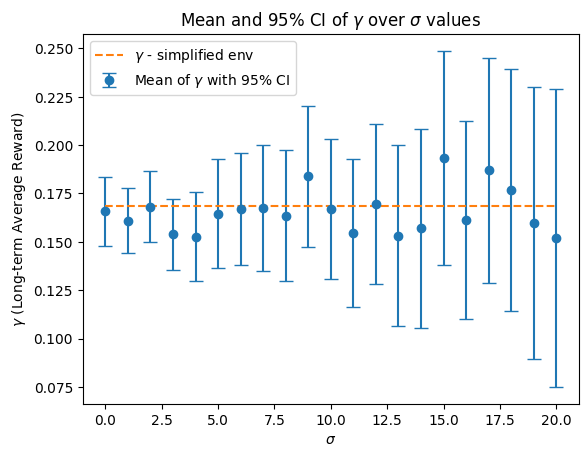}
\includegraphics[height=3.8cm]{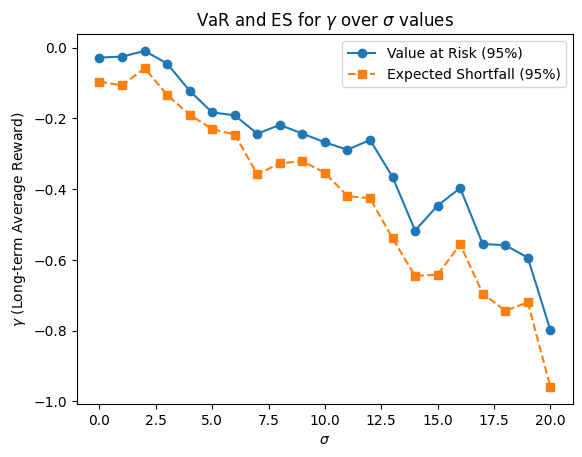}
\caption{
The long-term average reward $\gamma$ as a function of inverse of leverage $r$ and volatility $\sigma$ (left) by using Monte-Carlo simulations under the original dynamics~\eqref{eq:cash_balance} with various volatility $\sigma$ and inverse of leverage $r$. The middle and right panels plot the
mean and $95\%$ confidence intervals (middle) and Value at Risk and Expected Shortfall (right) of the long-term average reward $\gamma$ across various  volatility $\sigma$ under the original dynamics~\eqref{eq:cash_balance}.
While we do know that $\gamma$ given by~\eqref{form of ergodic constant} from the simplified dynamics~\eqref{simplified cash process} does not depend on $\sigma$ a question remains to what extent this is a consequence of the simplification. 
From the plots is seems that even under~\eqref{eq:cash_balance} the expectation of long-term average reward $\gamma$ shows negligible dependence on $\sigma$, whereas the risk of loss increases with rising volatility.}
\label{fig:gamma_sigma}
\end{figure}

Figure~\ref{fig:gamma_sigma} (middle) displays the mean of the long-term average reward $\gamma$ over different values of $\sigma$, with 95\% confidence intervals under the original dynamics~\eqref{eq:cash_balance} . 
Overall, the mean of $\gamma$ remains relatively stable as $\sigma$ increases, fluctuating around the dotted line, which is the theoretical value of $\gamma$ from~\eqref{form of ergodic constant} under the simplified model. 
We also see that it is just about within the 95\% confidence bounds arising from using a finite number of Monte-Carlo trajectories in simulation.
The right panel plots Value at Risk (VaR) and Expected Shortfall (ES) for $\gamma$ over different $\sigma$.
We can see that both VaR and ES decline as $\sigma$ increases.
In other words: higher volatility increases risk but doesn't impact the average long-term reward.

\section{Parameter calibration}
\label{appendix B}
In this section, we provide an example of possible approach to parameter calibration. 
There are nearly endless number of choices to be made in the methodology and we do not claim that the approach below is the best. 
The aim is to show that the calibration is possible using classical statistical tools and argue that the data required is available from decentralised derivative exchanges.

Let $[0, T]$ be a fixed time interval, e.g. 1 hour.
We collect the times at which liquidations of distressed traders occur over $[0,T]$, denoted by $(\tau_i)_{i=1}^N \in [0,T]$, where $N$ represents the total number of liquidation events observed.
Additionally, we record the absolute value of the size of the distressed positions at each liquidation, represented by $(\zeta_i)_{i = 1}^N$. By using maximum likelihood estimators (MLE), see~\cite[Section 7.2]{casella2024statistical}, we estimate the intensity of liquidation occurrence $\lambda$ and the average size of distressed positions $\eta$ as 
$\lambda = \frac{N}{\tau_N}$ and $\eta = \frac{1}{N} \sum_{i=1}^N \zeta_i$.

To calibrate the temporary price impact parameter $k$, we divide the time interval $[0, T]$ into multiple subintervals $\sum_i [t_i, t_{i+1}]$, for example, segments of 5 minutes each. 
Within each subinterval, we take a snapshot of the buy (sell) side of the limit order book at time $t_i$.
Let $(P_l^{(i)}, V_l^{(i)})_{l=1}^N$ denote the price levels and corresponding volumes of limit orders in the limit order book, which the superscript represents the time index. 
For the purpose of regression, we define a set of fixed trade sizes $(Q_j)_{j=1}^M$, such as $Q_1 = 1, Q_2 = 2, \dots, Q_M = 100$.
By walking through the limit order book, we determine the execution price per unit size, $\hat P_j^{(i)}$, for each order with volume $Q_j$ as
\begin{equation*}
\hat P_j^{(i)} = \sum_l \frac{P_{l}^{(i)} V_{l}^{(i)}}{Q_j} \, , \quad \text{with } \sum_l V_{l}^{(i)} = Q_j \, .
\end{equation*}
We then perform a linear regression of $|\hat P_j^{(i)} - S^{(i)}|$ against $Q_j$, where $S^{(i)}$ is the midprice at time $t_i$. The slope of this linear regression, denoted $k^{(i)}$, represents the calibrated temporary price impact parameter for the interval $[t_i, t_{i+1}]$. The overall parameter $k$ is taken as the mean value of $k^{(i)}$ across all subintervals, i.e. $k = \frac{1}{M} \sum_{i=1}^M k^{(i)}$.

To estimate the permanent price impact parameter $b$ we again divide the time interval $[0, T]$ into multiple subintervals $\sum_i [t_i, t_{i+1}]$. Let $\mu_i$ be the net order-flow in the subinterval $[t_i, t_{i+1}]$ and $\Delta S_i = S_{t_{i+1}} - S_{t_i}$ be the change in the midprice. 
As we assume the permanent price impact is linear in the model, hence we can perform the linear regression 
\begin{equation*}
\Delta S_i = b \mu_i + \varepsilon_i \, ,
\end{equation*}
where the error terms $\varepsilon_i$ are assumed to be iid with a normal distribution. The slope from this regression provides the calibration for $b$.

\section{Conclusion}

We have introduced an ergodic control problem describing how a decentralised derivatives exchange should manage disposal of positions accrued through liquidations (closeouts) of parties that fail to meet their margin requirement. 
We simplify the model (linear temporary and permanent price impacts, simplified cash balance dynamics) so that we can solve the ergodic stochastic control problem to obtain closed form solutions for the Markov optimal control and optimal long term rewards. 
The simplifications are in our opinion justified by existing literature on price impact and the numerical experiments performed.  
The optimal strategy can be easily implemented and moreover the expression for the long term average reward gives, after calibration of parameters to available data, an indication of how the insurance pool balance will evolve given the required margin fraction $r$ (inverse leverage).

The model is of course a crude simplification of reality and can be criticised from a number of angles, in particular we don't expect the mid-price to follow a Gaussian distribution, and we shouldn't expect price impacts to be linear. 
Nevertheless, we believe that the benefit of having obtained simple formulae that have clear interpretations is valuable.

\section{Proofs}

\subsection{Dimension reduction under model simplifications}
\label{Dimension reduction under model simplifications}
We will show that under the model simplifications~\ref{simplification1}), \ref{simplification2}) and~\ref{simplification3})
the control problem~\eqref{eq:objective_function} can be simplified to~\eqref{reduced ergodic control problem}.
First observe that
\begin{equation}
\begin{split}
\mathrm{d}(X_t^\nu + S_t^\nu Q_t^\nu) & = \mathrm{d} X_t^\nu  + Q_t^\nu \mathrm{d}S_t^\nu + S_t^\nu \mathrm{d}Q_t^\nu + \mathrm{d} S_t^\nu \mathrm{d} Q_t^\nu \\ 
& = (S_t^{\nu} - k \nu_t)\nu_t \mathrm{d}t + r \eta S_0 \mathrm{d}N_t^{+}+ r \eta S_0 \mathrm{d} N_t^{-} \\ & \qquad + Q_t^\nu (-b\nu_t \mathrm{d}t + \sigma \mathrm{d}W_t)  + S_t^\nu (- \nu_t \mathrm{d}t + \eta \mathrm{d} N_t^{+} - \eta \mathrm{d} N_t^{-})  \\ 
& = (- k \nu_t^2 - b\nu_t Q_t^\nu + 2 r S_0 \lambda \eta) \mathrm{d}t  + (\eta r S_0 + \eta S_t^\nu ) \mathrm{d} \tilde N_t^+  \\ 
& \qquad + (\eta r S_0 - \eta  S_t^\nu ) \mathrm{d} \tilde N_t^- + \sigma Q_t^\nu \mathrm{d} W_t \, , 
\end{split}
\end{equation}
where $\tilde N_t^{\pm}$ are independent compensated Poisson processes. 
If we can show that the expectation of all the stochastic integral terms is $0$ then  $J$ given by~\eqref{eq:objective_function} can be simplified as follows:
\begin{equation*}
    \begin{split}
        J(x, S, q; \nu) & = \lim_{T \rightarrow + \infty}  \frac{1}{T} \mathbb{E}_{q, x, S} \bigg[ \int_0^T \mathrm{d} (X^{\nu}_t  + S_t^{\nu} Q_t^{\nu}) - \phi \int_0^T (Q^{\nu}_t)^2 \, \mathrm{d} t \bigg] \\ 
        & = \lim_{T \rightarrow + \infty}  \frac{1}{T} \mathbb{E}_{q} \bigg[ \int_0^T \big(- k \nu_t^2 - b\nu_t Q_t^\nu + 2 r S_0 \lambda \eta  - \phi (Q^{\nu}_t)^2 \big) \, \mathrm{d} t \bigg] \\
        & = \lim_{T \rightarrow + \infty}  \frac{1}{T} \mathbb{E}_{q} \bigg[ \int_0^T F(Q_t^\nu, \nu_t) \big) \, \mathrm{d} t \bigg] =: J(q; \nu) \, , 
    \end{split}
\end{equation*}
where the running reward function $F: \mathbb R^2 \to \mathbb R$ is 
\begin{equation*} 
F(q, \nu) =  - k \nu^2 - b\nu q - \phi q^2 + 2 \lambda \eta r S_0 \,. 
\end{equation*}
Thus to argue that the control problem~\eqref{eq:objective_function} can be simplified to~\eqref{reduced ergodic control problem}
it just remains to show that the expectations of the stochastic integrals is indeed $0$.

As $S_t^\nu$ with dynamics~\eqref{eq:midprice} is $\mathcal{F}_t$-adapted and continuous, now we need to show that $\eta r S_0 \pm \eta S_t^\nu$ are square integrable so that the stochastic integrals with respect to the compensated Poisson processes are 0, see~\cite[Theorem 11.4.5]{shreve2004stochastic}. 
To see this, note that
\begin{equation*}
    \begin{split}
        \mathbb E \int_0^T (\eta r S_0 \pm \eta S_t^\nu)^2 \mathrm{d}t  \leq C T \sup_{t \in [0,T]} \mathbb E \big[(S_t^\nu)^2 \big] + C T \sup_{t \in [0,T]} \mathbb E \big[S_t^\nu \big] + CT \, ,
    \end{split}
\end{equation*}
where $C$ are constants independent of $T$ that are different for each term. 
Hence to show that $\mathbb E \int_0^T (\eta r S_0 \pm \eta S_t^\nu)^2 \mathrm{d}t < + \infty$, it is enough to show that $\sup_{t \in [0,T]} \mathbb E \big[(S_t^\nu)^2 \big] < + \infty$ due to the fact that $\mathbb E \big[S_t^\nu \big] \leq \sqrt{\mathbb E \big[(S_t^\nu)^2 \big]}$. 
By the dynamics~\eqref{eq:midprice}, we have $S_t^\nu = S_0  - b \int_0^t \nu_s \mathrm{d} s  + \sigma W_t$.
Hence 
\begin{equation*}
    (S_t^\nu)^2 = S_0^2 + \Big(\int_0^t \nu_s \mathrm{d}s\Big)^2 + \sigma^2 (W_t)^2 + 2 S_0 \sigma W_t - 2 \sigma \Big(\int_0^t \nu_s \mathrm{d} s \Big) W_t - 2 S_0 \int_0^t \nu_s \mathrm{d} s \, . 
\end{equation*}
Therefore, for any  $\nu \in \mathcal{A}$,
\begin{equation*}
    \begin{split}
        \sup_{t \in [0,T]} \mathbb E\big[(S_t^\nu)^2\big] & \leq S_0^2 + T \mathbb E\int_0^T |\nu_t|^2 \mathrm{d} t  + \sigma^2 T + 2 \sigma T \sqrt{\mathbb E\int_0^T |\nu_t|^2 \mathrm{d} t } + 2S_0 \sqrt{T \mathbb E\int_0^T |\nu_t|^2 \mathrm{d} t } \\
        & < + \infty \,, 
    \end{split}
\end{equation*}
hence we can conclude that $\mathbb E[\int_0^T (\eta r S_0 \pm \eta  S_t^\nu ) \mathrm{d} \tilde N_t^\pm ] = 0$. 

Moreover, by the dynamics~\eqref{eq:open_positions}, we have 
\begin{equation*}
    Q_t^\nu = Q_0 - \int_0^t \nu_s \mathrm{d} s + \sum_{k=1}^{N_t^+} \zeta_k^+ + \sum_{k=1}^{N_t^-} \zeta_k^- \, . 
\end{equation*}
Therefore, 
\begin{equation*}
\begin{split}
    \mathbb E[|Q_t^\nu|^2] & \leq 4 Q_0^2  + 4 t \mathbb E[\int_0^t |\nu_s|^2 \mathrm{d} s] + 4  \mathbb E\Big[ \big(\sum_{k=1}^{N_t^+} \zeta_k^+ \big)^2 \Big] + 4  \mathbb E\Big[ \big(\sum_{k=1}^{N_t^-} \zeta_k^- \big)^2 \Big]\\ 
     & \leq 4 Q_0^2  + 4 t \mathbb E[\int_0^t |\nu_s|^2 \mathrm{d} s] + 4 \sum_{- , +} \bigg(\sum_{n \geq 0} \mathbb E\Big[ \sum_{k, l}^{n} \zeta_k^\pm \zeta_l^\pm \Big| N_t^\pm = n \Big] \mathbb P(N_t^\pm = n)\bigg) \, ,
\end{split}
\end{equation*}
where $\sum_{-,+}$ is used to indicate the summation over both the $+$ and $-$ processes. 
Recall that $\nu \in \mathcal{A}$ is square integrable. 
Additionally, $\zeta^\pm_k$ are i.i.d with $\mathbb E[\zeta^\pm_k] = \eta^\pm$ and finite second moment and independent of the Poisson processes $N_t^\pm$. 
Hence we have 
\begin{equation*}
    \begin{split}
        \sup_{t \in [0,T]} \mathbb E[|Q_t^\nu|^2] & \leq 4 Q_0^2  + 4 T \mathbb E[\int_0^T |\nu_s|^2 \mathrm{d} s] \\
        & \qquad \qquad + 4 \sum_{- , +} \Big(\mathbb E \big[ N_t^\pm \big] \mathbb E \big[ |\zeta_1^\pm|^2 \big] + \Big( \mathbb E \big[ |N_t^\pm|^2  - N_t^\pm \big]  \Big) \mathbb E \big[\zeta_1^+ \zeta_2^+ \big] \Big) \\
        & \leq 4 Q_0^2  + 4 T \mathbb E[\int_0^T |\nu_s|^2 \mathrm{d} s] + 4  \sum_{- , +} \Big(\lambda^\pm T \mathbb E \big[ |\zeta_1^\pm|^2 \big] + (\lambda^\pm)^2 T^2 (\eta^\pm)^2 \Big) \\ 
        & < + \infty \, .
    \end{split}
\end{equation*}
Therefore the stochastic process $Q_t^\nu$ with dynamics~\eqref{eq:open_positions} is $\mathcal{F}_t$-adapted and square integrable for any $\nu \in \mathcal{A}$, hence $\mathbb E[\int_0^T \sigma Q_t^\nu \mathrm{d} W_t] = 0$.

\subsection{Finite-time-horizon control problem~\eqref{finite time model}} 
\label{proof of prop: solution to the finite model}
\begin{prop} \label{solution to the finite model}
Let $\varphi = \sqrt{\frac{\phi}{k}}$ and $\xi = \frac{\alpha - \frac{1}{2}b + k \varphi}{\alpha - \frac{1}{2}b - k \varphi}$. Let $h_0, h_2 \in C^1( [0, T], \mathbb R)$ be 
\begin{equation} \label{solution h_2 h_0 finite}
\begin{split}
h_2(t) & = k \varphi \frac{1 + \xi e^{2 \varphi (T-t)}}{1 - \xi e^{2 \varphi (T-t)}} - \frac{1}{2}b \, , \\
h_0(t) & = 2 \lambda \eta^2 k \ln  \frac{\xi - 1}{\xi e^{\varphi (T-t)} - e^{- \varphi (T-t)}} + (\lambda \eta^2 b - 2 r \lambda \eta S_0) (t-T) \, .
\end{split}
\end{equation}
Then the function $u: [0, T] \times \mathbb R \to \mathbb R$ given by 
\begin{equation} \label{solution}
u(t,q) = h_0(t) + h_2(t) q^2
\end{equation}
solves the HJB equation associated with the control problem~\eqref{finite time model} which is given by 
\begin{equation} \label{finite hjb}
0 = \partial_t u + \sup_{\nu \in \mathcal{A}[t, T]} \Big\{ - \nu \partial_q u + F(\nu, q) + \lambda \mathbb E\big[u(t, q + \zeta^+) - u(t,q) \big]+ \lambda \mathbb E\big[u(t, q - \zeta^-) - u(t,q) \big]  \Big\} \, ,
\end{equation}
with the terminal condition 
\begin{equation*}
u(T, q) = - \alpha q^2 \, .
\end{equation*}
\end{prop}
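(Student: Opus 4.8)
The statement is a verification result: the candidate $u$ is given explicitly, so the natural route is direct substitution into~\eqref{finite hjb} followed by separation of variables. The plan is to insert the quadratic ansatz $u(t,q)=h_0(t)+h_2(t)q^2$ into the HJB equation, evaluate the supremum over $\nu$ in closed form, and thereby reduce the PDE to a decoupled pair of ODEs for $h_2$ and $h_0$; one then checks that the functions in~\eqref{solution h_2 h_0 finite} solve these ODEs with the correct terminal data. The quadratic-plus-constant form of the ansatz is dictated by the structure of the problem: $F$ is quadratic in $(q,\nu)$, the terminal cost $G(q)=-\alpha q^2$ is even, and the liquidation dynamics are symmetric in the sense of simplification~\ref{simplification2}), so no linear-in-$q$ term should be generated.

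First I would compute the supremum. With $\partial_q u = 2h_2(t)q$, the $\nu$-dependent part of the bracket is $-2h_2 q\,\nu - k\nu^2 - b\nu q$, which is strictly concave in $\nu$ because $-k<0$; the first-order condition gives the interior maximiser
\begin{equation*}
\nu^{*}(t,q) = -\frac{(2h_2(t)+b)\,q}{2k}\,,
\end{equation*}
and substituting back turns the supremum into $\frac{(2h_2(t)+b)^2}{4k}q^2 - \phi q^2 + 2\lambda\eta r S_0$. The two jump terms are linear in $u$, so for the quadratic ansatz they reduce to $\lambda h_2(t)\,\mathbb{E}[(\zeta^{+})^2]+\lambda h_2(t)\,\mathbb{E}[(\zeta^{-})^2]$ together with cross terms $\pm 2\lambda\eta h_2(t)q$ that cancel exactly under the symmetry assumption $\eta^{+}=\eta^{-}=\eta$; this cancellation is what makes the purely quadratic ansatz consistent. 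Writing the second moments as $\eta^2$ (the convention used in~\eqref{solution h_2 h_0 finite}), the jump contribution is $2\lambda\eta^2 h_2(t)$.

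Collecting powers of $q$ then yields the decoupled system
\begin{equation*}
h_2'(t) = \phi - \frac{(2h_2(t)+b)^2}{4k}\,, \qquad h_0'(t) = -2\lambda\eta r S_0 - 2\lambda\eta^2 h_2(t)\,,
\end{equation*}
with terminal data $h_2(T)=-\alpha$ and $h_0(T)=0$ read off from $u(T,q)=-\alpha q^2$. The substitution $w=h_2+\tfrac12 b$ linearises the Riccati equation to $w' = \phi - w^2/k$, a constant-coefficient Riccati whose equilibria are $\pm\sqrt{k\phi}=\pm k\varphi$; integrating and imposing $w(T)=\tfrac12 b-\alpha$ fixes the integration constant to be precisely $\xi=\frac{\alpha-\frac12 b+k\varphi}{\alpha-\frac12 b-k\varphi}$, reproducing the formula for $h_2$. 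Finally $h_0$ is obtained by direct integration of $h_2$ over $[t,T]$; the antiderivative $\int h_2$ produces the logarithmic term, and after combining the linear piece $k\varphi(T-t)=\ln e^{\varphi(T-t)}$ with the logarithm one recovers the expression in~\eqref{solution h_2 h_0 finite}.

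The computations above are routine but bookkeeping-heavy; the two points I would treat with care are, first, verifying the stated $C^1$ regularity, which requires the denominator $1-\xi e^{2\varphi(T-t)}$ to stay away from zero on all of $[0,T]$ --- this holds once $\xi\notin[e^{-2\varphi T},1]$, e.g.\ when $\alpha-\tfrac12 b>k\varphi$ so that $\xi>1$ --- and, second, confirming the sign of the argument of the logarithm so that $h_0$ is real-valued on $[0,T]$. I expect the main obstacle to be not any single step but the algebraic reconciliation of the closed-form $h_0$ with the integral of $h_2$ (the manipulation that rewrites $e^{\varphi(T-t)}(1-\xi)/(1-\xi e^{2\varphi(T-t)})$ as $(\xi-1)/(\xi e^{\varphi(T-t)}-e^{-\varphi(T-t)})$), together with keeping the terminal conditions consistent throughout.
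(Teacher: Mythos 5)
Your proposal is correct and takes essentially the same approach as the paper: quadratic ansatz, pointwise maximisation giving $\nu^{*} = -(2h_2+b)q/(2k)$, reduction to a Riccati ODE for $h_2$ solved via the shift $w = h_2 + \tfrac{1}{2}b$ and separation of variables, and direct integration for $h_0$. The only differences are cosmetic: the paper carries a linear term $h_1(t)q$ in the ansatz and derives $h_1 \equiv 0$ from its ODE and terminal condition (where you exclude it a priori by the symmetry of the jump terms), it adopts the same convention $\mathbb{E}[(\zeta^{\pm})^2] = \eta^2$ that you flag, and it assumes $\alpha \gg b, k$ so that $\xi > 1$, which settles your regularity caveat about the denominator and the argument of the logarithm.
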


\begin{proof}
As the value function is given by~\eqref{finite time model}, by using the Dynamic Programming Principle, the associated HJB equation is
\begin{equation*}
\begin{split}
0 = & \partial_t u  + \sup_{\nu} \big\{ -\nu \partial_q u + F(\nu, q) \big\} +  \lambda \mathbb E\big[u(t, q + \zeta^+) - u(t,q) \big]+ \lambda \mathbb E\big[u(t, q - \zeta^-) - u(t,q) \big]   \big\} \, ,
\end{split}
\end{equation*}
where $F$ is given by~\eqref{running reward function}. By letting $\partial_\nu \big(  - \nu \partial_q u + F(\nu, q) \big) = 0$ and checking that $\partial_{\nu \nu} \big(  - \nu \partial_q v + F(\nu, q) \big) < 0$, we know that the supremum is attained at  
\begin{equation} \label{sup of control}
\nu^* =  \frac{-(bq + \partial_q u)}{2k} \, .
\end{equation}
Therefore, the HJB equation satisfies
\begin{equation} \label{trans HJB}
0 = \partial_t u  + \frac{1}{4k} (bq + \partial_q u)^2 - \phi q^2 + 2 r \lambda \eta S_0  +  \lambda \mathbb E\big[u(t, q + \zeta^+) - u(t,q) \big]+ \lambda \mathbb E\big[u(t, q - \zeta^-) - u(t,q) \big]   \, .
\end{equation}

To solve~\eqref{trans HJB}, we make the ansatz $u(t,q) = h_0(t) + h_1(t) q + h_2(t) q^2$ with the terminal conditions $h_0(T)= 0, h_1(T) = 0$ and $h_2(T)= -\alpha$. By plugging the ansatz into~\eqref{trans HJB}, we have the following coupled ODEs 
\begin{equation}
\begin{split}
\partial_t h_0 + \frac{1}{4k} h_1^2+ 2 \lambda \eta^2 h_2 + 2 r \lambda \eta S_0  & = 0 \, , \\
\partial_t h_1 + \frac{b + 2h_2}{2k} h_1 & = 0 \, ,\\
\partial_t h_2  + \frac{(b + 2h_2)^2}{4k} - \phi & = 0 \, .
\end{split}
\end{equation}

We start with solving $h_2(t)$. Let $h'_2(t) = h_2(t) + \frac{1}{2} b$ with the terminal condition $h'_2(T) = h_2(T) + \frac{1}{2}b = - \alpha + \frac{1}{2}b$. Therefore, 
\begin{equation*}
k \partial_t h'_2 - k \phi + (h'_2)^2 = 0 \, . 
\end{equation*}
We transform it into the following form since $\phi > 0$
\begin{equation*}
\begin{split}
\frac{\partial_t h'_2}{\sqrt{k\phi} - h'_2} + \frac{\partial_t h'_2}{\sqrt{k\phi}+h'_2} & = 2 \sqrt{\frac{\phi}{k}} \, . 
\end{split}
\end{equation*}
By integrating both sides over $[t,T]$ and using the terminal condition $h'_2(T) = \frac{1}{2}b - \alpha$, where we assume that $\alpha \gg b, k$, see~\cite[Chapter 6.5]{cartea2015algorithmic} for more details, then we obtain 
\begin{equation*}
h'_2(t) = k \varphi \frac{1 + \xi e^{2 \varphi (T-t)}}{1 - \xi e^{2 \varphi (T-t)}} \, ,
\end{equation*}
where $\varphi = \sqrt{\frac{\phi}{k}}$ and $\xi = \frac{\alpha - \frac{1}{2}b + k \varphi}{\alpha - \frac{1}{2}b - k \varphi} > 1$. Therefore, 
\begin{equation*}
h_2(t) = k \varphi \frac{1 + \xi e^{2 \varphi (T-t)}}{1 - \xi e^{2 \varphi (T-t)}} - \frac{1}{2}b \, . 
\end{equation*}

We then solve $h_1(t)$. Let us define $m(t) = \frac{b+2h_2(t)}{2k} = \frac{h'_2(t)}{k}$, then $h_1(t)$ satisfies 
$$\partial_t h_1 + m(t) h_1 = 0 \, , $$ with the terminal condition $h_1(T) = 0$. To solve the ODE, we have
\begin{equation*}
\begin{split}
\ln |h_1(t) | & = \int \frac{1}{h_1} \mathrm{d} h_1 = - \int m(t) \mathrm{d} t \\
& = - \frac{1}{k}\int h'_2(s) \mathrm{d} s \\ 
& = -  \int \varphi \frac{1 + \xi e^{2 \varphi (T-s)}}{1 - \xi e^{2 \varphi (T-s)}} \mathrm{d} s \, ,
\end{split}
\end{equation*}
By using the change of variables $x = \sqrt{\xi} e^{\varphi (T-s)}$ in the integration, we have 
\begin{equation*}
h_1(t) = A \frac{\xi e^{\varphi T} - e^{-\varphi T}}{\xi e^{\varphi(T-t)} - e^{-\varphi(T-t)}} \, , 
\end{equation*}
with $A \in \mathbb R$ that represents the general solution for the ODE. As the terminal condition is $h_1(T)=0$, therefore $A = 0$, hence $h_1(t) = 0$. 

We finally solve $h_0(t)$. Let $n(t) = 2 \lambda \eta^2 h_2(t)+ 2 r \lambda \eta S_0$, then $h_0(t)$ satisfies 
$$\partial_t h_0 + n(t)  = 0 \, , $$ with the terminal condition $h_0(T)= 0$. Therefore, 
\begin{equation*}
\begin{split}
h_0(T) - h_0(t) & = - \int_t^T n(s) \mathrm{d} s = - 2 \lambda \eta^2 \int_t^T   h'_2(s) \mathrm{d} s + \int_t^T (\lambda \eta^2 b - 2 r \lambda \eta S_0) \mathrm{d} s \\
& = - 2 \lambda \eta^2 \int_t^T    k \varphi  \frac{1 + \xi   e^{2 \varphi (T-s)}}{1 - \xi   e^{2 \varphi (T-s)}} \mathrm{d} s + (\lambda \eta^2 b - 2 r \lambda \eta S_0) (T - t )  \\ 
& = - 2 \lambda \eta^2 k \ln \frac{\xi   - 1}{\xi  e^{\varphi(T-t)} - e^{-\varphi(T-t)}} + (\lambda \eta^2 b - 2 r \lambda \eta S_0) (T - t ) \, .
\end{split}
\end{equation*}
Therefore, 
\begin{equation*}
h_0(t) = 2 \lambda \eta^2 k \ln \frac{\xi - 1}{\xi  e^{\varphi(T-t)} - e^{-\varphi(T-t)}}  + (\lambda \eta^2 b - 2 r \lambda \eta S_0) (t-T) \, , 
\end{equation*}
hence the results. 
\end{proof}

\begin{theorem}[verification theorem: finite-time-horizon control problem] \label{verification finite}
Let $u$ be given by the equation~\eqref{solution} in Proposition~\ref{solution to the finite model} . Then $u$ is the value function to the control problem~\eqref{finite time model}. Moreover, the optimum is achieved by the admissible optimal Markov control $\nu^{*}$ given by the feedback form 
\begin{equation} \label{feedback control finite time model}
\nu^*(t, q) =  \varphi \frac{\xi e^{2 \varphi (T-t)} + 1}{ \xi e^{2 \varphi (T-t)} - 1} q \,.
\end{equation}
\end{theorem}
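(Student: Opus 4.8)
The plan is to run the standard verification argument for controlled jump processes, exploiting that Proposition~\ref{solution to the finite model} has already furnished a classical (i.e. $C^{1}$ in $t$, smooth in $q$) solution $u(t,q) = h_0(t) + h_2(t)q^2$ of the HJB equation~\eqref{finite hjb} with the correct terminal condition $u(T,q) = -\alpha q^2 = G(q)$. First I would fix an arbitrary admissible control $\nu \in \mathcal{A}[t,T]$ and apply the It\^o formula for jump processes to $s \mapsto u(s, Q_s^\nu)$ on $[t,T]$. Since $Q^\nu$ has dynamics~\eqref{eq:open_positions} with no continuous martingale part, only the drift $-\nu_s$ and the jumps driven by $N^\pm$ contribute; writing the jump sum as a compensated martingale plus its compensator and taking expectations gives
\begin{equation*}
\begin{split}
\mathbb E_{t,q}\big[u(T, Q_T^\nu)\big] - u(t,q) = \mathbb E_{t,q}\int_t^T \Big[ & \partial_t u - \nu_s \partial_q u \\
& + \lambda \mathbb E\big[u(s, Q_s^\nu + \zeta^+) - u(s, Q_s^\nu)\big] \\
& + \lambda \mathbb E\big[u(s, Q_s^\nu - \zeta^-) - u(s, Q_s^\nu)\big] \Big]\, \mathrm ds \,,
\end{split}
\end{equation*}
provided the compensated-Poisson stochastic integral has zero expectation.

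Next I would invoke the HJB equation itself: because $-\nu_s \partial_q u + F(\nu_s, Q_s^\nu)$ never exceeds the supremum appearing in~\eqref{finite hjb}, the bracketed integrand above is bounded by $-F(\nu_s, Q_s^\nu)$. Substituting and using the terminal condition yields $u(t,q) \geq \mathbb E_{t,q}\big[\int_t^T F(Q_s^\nu, \nu_s)\,\mathrm ds + G(Q_T^\nu)\big]$, and taking the supremum over $\nu$ gives $u \geq v(\cdot\,;T)$. For the reverse inequality I would plug in the feedback $\nu^*$ from~\eqref{feedback control finite time model}, which is exactly the pointwise maximiser~\eqref{sup of control} once $\partial_q u = 2h_2(t)q$ is inserted; for this control the HJB holds with equality, so the chain of inequalities collapses to $u(t,q) = \mathbb E_{t,q}\big[\int_t^T F(Q_s^{\nu^*},\nu_s^*)\,\mathrm ds + G(Q_T^{\nu^*})\big] \leq v(t,q;T)$. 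Combining the two inequalities proves $u = v$ and the optimality of $\nu^*$.

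The remaining technical work, which is where I expect the real care to be needed, is to justify admissibility of $\nu^*$ and the vanishing of the stochastic integrals. I would first verify that the feedback coefficient $c(t) := \varphi\,(\xi e^{2\varphi(T-t)}+1)/(\xi e^{2\varphi(T-t)}-1)$ is continuous and bounded on $[0,T]$: since $\xi > 1$ and $e^{2\varphi(T-t)} \geq 1$, the denominator stays strictly positive, so no blow-up occurs. Then $Q^{\nu^*}$ solves the linear SDE $\mathrm dQ_s = -c(s)Q_s\,\mathrm ds + \zeta^+\mathrm dN_s^+ - \zeta^- \mathrm dN_s^-$, for which existence, uniqueness and the moment bound $\sup_{s\in[t,T]}\mathbb E[|Q_s^{\nu^*}|^2] < \infty$ follow by the same estimate already carried out in Section~\ref{Dimension reduction under model simplifications}; boundedness of $c$ then gives $\mathbb E\int_t^T |\nu_s^*|^2\,\mathrm ds < \infty$, so $\nu^* \in \mathcal A[t,T]$. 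Finally, for any admissible $\nu$ the compensated-Poisson integral has zero expectation because its integrand $2h_2(s)Q_{s-}^\nu$ is square-integrable ($h_2$ bounded and $\sup_s \mathbb E[|Q_s^\nu|^2] < \infty$) and $\zeta^\pm$ has finite second moment, exactly as in the square-integrability argument of Section~\ref{Dimension reduction under model simplifications}. The conceptual content is entirely routine; the only genuine obstacle is bookkeeping the jump terms in It\^o's formula and confirming these integrability conditions so that the martingale terms drop out.
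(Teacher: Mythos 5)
Your proposal is correct and follows essentially the same route as the paper's own proof: a verification argument that applies the jump It\^o formula to $s \mapsto u(s, Q_s^\nu)$, kills the compensated-Poisson integrals by square integrability, uses the HJB inequality (with equality at the pointwise maximiser~\eqref{sup of control}) to get $u \geq v$, and then checks that $\nu^*$ is admissible via boundedness of the feedback coefficient and a second-moment bound on $Q^{\nu^*}$ so that equality is attained. The only cosmetic difference is ordering and bookkeeping: the paper establishes admissibility of $\nu^*$ first, using a Gr\"onwall estimate for $\mathbb E|Q_s^{\nu^*}|^2$ with the explicit bound $L = \varphi + \tfrac{2\varphi}{\xi-1}$ on the feedback coefficient, whereas you defer the same integrability checks to the end and borrow the moment estimates from Section~\ref{Dimension reduction under model simplifications}.
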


\begin{proof}
Let $h_0, h_2 \in C^1([0, T], \mathbb R)$ be given by the equation~\eqref{solution h_2 h_0 finite}, and $u(t,q) = h_0(t) + h_2(t) q^2$ be the candidate value function to the control problem in the finite-time setting. As the HJB equation~\eqref{finite hjb} attains its supremum at $\nu^*$ given by~\eqref{sup of control}, hence 
\begin{equation*}
\nu^* =  - \frac{b + 2 h_2(t)}{2k} q =  \varphi \frac{\xi  e^{2 \varphi (T-t)} + 1}{\xi   e^{2 \varphi (T-t)} - 1} q \, .
\end{equation*}
Let $\varrho(t) = \varphi \frac{\xi  e^{2 \varphi (T-t)} + 1}{\xi   e^{2 \varphi (T-t)} - 1}$. Clearly, the control $\nu^*$ is a measurable function of $t$ and $q$, and $\nu^*$ is $\mathbb F-$adapted. 
To prove $\nu^* \in \mathcal{A}[t,T]$, we need to show that $\nu^*$ is square integrable. 
Due to the fact that 
\begin{equation*}
    \mathbb E[\int_t^T |\nu^*|^2 \mathrm{d} s] \leq \mathbb E[\int_t^T |\sup_{s \in [0,T]} \varrho(s)|^2 |Q_s^{\nu^*}|^2 \mathrm{d} s] \leq L^2 \mathbb E[\int_t^T |Q_s^{\nu^*}|^2 \mathrm{d} s] \, ,
\end{equation*}
where $L = \varphi + \frac{2 \varphi}{\xi -1 } > 0$ is a constant independent of time, it is enough to show that $Q_t^{\nu^*}$ is square integrable. 
For notational simplicity we omit the superscript $\nu^*$ of $Q^{\nu^*}_t$. 
Since the dynamics of $Q_t$ is 
\begin{equation*}
\begin{split}
d Q_t & = - \nu^*(Q_t) \mathrm{d} t + \zeta^+ \mathrm{d} N_t^+ - \zeta^- \mathrm{d} N_t^- \\ 
& = \big( - \nu^*(Q_t)  + \lambda^+ \zeta^+ - \lambda^- \zeta^- \big) \mathrm{d} t + \zeta^+ \mathrm{d} \tilde N_t^+ - \zeta^- \mathrm{d} \tilde N_t^- \, . 
\end{split}
\end{equation*}
As $\mathbb E[\lambda^\pm \zeta^\pm] = \lambda \eta$ and $\mathbb E[\zeta^{\pm2}] < + \infty$ and by using the Gr\"{o}nwall's inequality, for any $s \in [t,T]$, we have
\begin{equation*}
\begin{split}
\mathbb E \left| Q_s \right|^2 & \leq 4 q^2 + 4 (T-t) \mathbb E \big[\int_t^s \left|\nu^*(Q_u) \right|^2 \mathrm{d} u \big]  \\ 
& \leq 4q^2 \exp \big( (T- t)\int_t^s \varrho (u) \mathrm{d} u \big) \leq 4q^2 \exp \big( L (T- t)^2\big) \, , 
\end{split}
\end{equation*}
where $L = \varphi + \frac{2 \varphi}{\xi -1 }$ is independent of time. Therefore, we have 
\begin{equation*}
\mathbb E \big[\int_t^T Q_s^2 \, \mathrm{d} s \big] \leq \int_t^T \sup_{s \in [t,T]}\mathbb E[ |Q_s|^2] \, \mathrm{d} s \leq (T-t) 4q^2 \exp \big( L (T- t)^2\big)   < +\infty \, .
\end{equation*}

Now we need to show that $u(t,q) = v(t,q)$, i.e. $u$ is the value function of the control problem~\eqref{finite time model} and the control $\nu^*$ is indeed the optimal control. To that purpose, let us consider a control $\nu \in \mathcal{A}[t, T]$.
From the It\^o's lemma, we have
\small
\begin{equation*}
\begin{split}
u(T, Q_T) & = u(t, q)  + \int_t^T  \partial_s u(s, Q_s) \, \mathrm{d} s - \int_t^T   \nu \partial_q u(s, Q_{s}) \, \mathrm{d} s \\
& \quad + \int_t^T \big[ u(s, Q_{s^-}+\zeta^+) - u(s, Q_{s^-})\big] \, \mathrm{d} sN_s^+  + \int_t^T \big[ u(s, Q_{s^-}-\zeta^-) - u(s, Q_{s^-}) \big] \, \mathrm{d} N_s^- \\
& = u(t, q)  + \int_t^T  \partial_s u(s, Q_{s}) \, \mathrm{d} s - \int_t^T   \nu \partial_q u(s, Q_{s}) \, \mathrm{d} s \\
& \quad + \int_t^T \lambda \big[ u(s, Q_{s^-}+\zeta^+) - u(s, Q_{s^-})\big] \, \mathrm{d} s + \int_t^T \lambda \big[ u(s, Q_{s^-}-\zeta^-) - u(s, Q_{s^-}) \big] \, \mathrm{d} s \\
& \quad + \int_t^T \big[ u(s, Q_{s^-}+\zeta^+) - u(s, Q_{s^-})\big] \, \mathrm{d} \tilde N_s^+  + \int_t^T \big[ u(s, Q_{s^-}-\zeta^-) - u(s, Q_{s^-}) \big] \, \mathrm{d} \tilde N_s^- \, , 
\end{split}
\end{equation*}
\normalsize
where the process $Q_t$ starts at time $t$ with initial values $q$ and $\tilde N_s^\pm$ are independent compensated Poisson processes. 
Now we have to ensure that the last two integrals consist of martingales so that their expectations are $0$. This can be concluded by using~\cite[Theorem 11.4.5]{shreve2004stochastic} and the fact that $Q_{s^-}$ is a left-continuous stochastic process and square integrable under the control $\nu \in \mathcal{A}[t,T]$. Take the expectation of $u(T, Q_T)$, hence, we have 
\begin{equation*}
\begin{split}
\mathbb E \big[u(T, Q_T) \big] & = \mathbb E \big[ G(Q_T)\big]= u(t,q) +  \mathbb E \Big[  \int_t^T  \big\{ \partial_s u(s, Q_{s})  -   \nu \partial_q u(s, Q_{s})  \\ 
& \quad + \lambda \big[ u(s, Q_{s^-}+\zeta^+) - u(s, Q_{s^-})\big]  +  \lambda \big[ u(s, Q_{s^-}-\zeta^-) - u(s, Q_{s^-}) \big] \big\} \, \mathrm{d} s \Big] \, .
\end{split}
\end{equation*}

Given that $\nu^*$ attains the equality in~\eqref{finite hjb}, hence, for any time $s \in [t,T]$ the following inequality holds
\begin{equation*}
\begin{split}
- F(Q_s, \nu) \geq & \partial_t u(s, Q_s) - \nu \partial_q u(s, Q_s) 
\\ & + \lambda \mathbb E\big[u(s, Q_{s^-} + \zeta^+) - u(s,Q_{s^-}) \big]+ \lambda \mathbb E\big[u(s, Q_{s^-} - \zeta^-) - u(s,Q_{s^-}) \big] \, .
\end{split}
\end{equation*}
Therefore, we have 
\begin{equation*}
\begin{split}
\mathbb E \Big[ \int_t^T F(Q_s^{\nu}, \nu) \,  \mathrm d s +  G(Q_T^{\nu}) \Big| Q_t = q\Big] \leq u(t,q) \, ,
\end{split}
\end{equation*}
where the equality is obtained with $\nu^*$ which is an admissible control. 
\end{proof}

\subsection{Discounted infinite-time-horizon control problem~\eqref{discounted infinite model}}
\label{proof theorem: existence of discounted infinite}
\begin{theorem} \label{existence of discounted infinite}
Let $h_0, h_2 \in \mathbb{R}$ be given by 
\begin{equation}
\begin{split}
h_2 & = \frac{k \beta - b}{2} - \frac{\sqrt{(k \beta - b)^2 + (4 k \phi - b^2)}}{2} \, , \\
h_0 & = \frac{2 \lambda \eta }{\beta} (\eta h_2 + r S_0) \, ,
\end{split}
\end{equation}
where we choose the discount factor $\beta$ such that $(k \beta - b)^2 + (4 k \phi - b^2) \geq 0$. 
Then the function $u : \mathbb R \to \mathbb R$ given by 
\begin{equation} \label{solution2}
u(q) = h_0 + h_2 q^2
\end{equation}
solves the HJB equation associated with the control problem~\eqref{discounted infinite model} which is given by
\begin{equation} \label{discounted infinite hjb}
0 =  - \beta u(q) + \sup_{\nu \in \mathcal{A}} \Big\{ - \nu \partial_q u + F(\nu, q) + \lambda \mathbb E\big[u( q + \zeta^+) - u(q) \big]+ \lambda \mathbb E\big[u( q - \zeta^-) - u(q) \big]  \Big\} \, ,
\end{equation}

Moreover, $u$ given by the equation~\eqref{solution2} is the value function to the control problem~\eqref{discounted infinite model}. 
\end{theorem}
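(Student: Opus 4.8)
The plan is to follow the same two-step route used for the finite-horizon problem in Proposition~\ref{solution to the finite model} and Theorem~\ref{verification finite}: first check by direct substitution that $u(q)=h_0+h_2 q^2$ solves the stationary HJB equation~\eqref{discounted infinite hjb}, and then run a martingale verification argument identifying this $u$ with the value function of~\eqref{discounted infinite model}.

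For the first step I would maximise the Hamiltonian pointwise. Because $F(\nu,q)=-k\nu^2-b\nu q-\phi q^2+2\lambda\eta r S_0$ is strictly concave in $\nu$ (second derivative $-2k<0$), the first-order condition gives the same maximiser $\nu^*=-(bq+\partial_q u)/(2k)$ as in~\eqref{sup of control}. Substituting it back produces the stationary analogue of~\eqref{trans HJB}, with an extra $-\beta u$ term. Inserting the quadratic ansatz $u=h_0+h_2 q^2$ and using $\partial_q u=2h_2 q$, the term $(b+2h_2)^2 q^2/(4k)$ carries no linear-in-$q$ part, and by the symmetry assumption~\ref{simplification2}) the linear contributions $\pm 2h_2\eta q$ coming from the two jump expectations cancel, so no linear coefficient is needed. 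Collecting powers of $q$ then leaves two scalar identities: the $q^2$ coefficient gives $h_2^2+(b-k\beta)h_2+\tfrac14(b^2-4k\phi)=0$, whose roots are $\tfrac12(k\beta-b)\pm\tfrac12\sqrt{(k\beta-b)^2+(4k\phi-b^2)}$, and I would select the minus root recorded in the statement, which is the unique real root with $h_2<0$ (ensuring the candidate is concave). The constant ($q^0$) coefficient, using that the jump terms contribute $2\lambda h_2\mathbb E[\zeta^2]=2\lambda\eta^2 h_2$, yields $\beta h_0=2\lambda\eta(\eta h_2+rS_0)$, i.e. the stated $h_0$. This verifies that $u$ solves~\eqref{discounted infinite hjb}.

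For the verification step I would repeat the argument of Theorem~\ref{verification finite} with the discount factor inserted. The candidate optimal feedback is the linear control $\nu^*(q)=-\tfrac{b+2h_2}{2k}q$, whose coefficient is positive, so that under $\nu^*$ the inventory obeys a mean-reverting SDE and, by a Gr\"onwall estimate as in Theorem~\ref{verification finite}, has second moment bounded uniformly in time; in particular $\nu^*\in\mathcal A$. For an arbitrary $\nu\in\mathcal A$ I would apply It\^o's formula to $e^{-\beta t}u(Q_t)$, use that $u$ solves~\eqref{discounted infinite hjb} so that $-\beta u(Q)+\mathcal L^\nu u(Q)\le -F(\nu,Q)$ pointwise with equality at $\nu^*$ (here $\mathcal L^\nu u(q):=-\nu\partial_q u(q)+\lambda\mathbb E[u(q+\zeta^+)-u(q)]+\lambda\mathbb E[u(q-\zeta^-)-u(q)]$), and take expectations. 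The integrals against the compensated Poisson processes are martingales of zero mean by square integrability of $Q$ and~\cite[Theorem 11.4.5]{shreve2004stochastic}. This gives $u(q)\ge \mathbb E\int_0^T e^{-\beta t}F(Q_t,\nu_t)\,\mathrm{d}t+\mathbb E[e^{-\beta T}u(Q_T)]$ for every $T$, with equality throughout when $\nu=\nu^*$.

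The delicate point is the transversality term $\mathbb E[e^{-\beta T}u(Q_T)]=e^{-\beta T}(h_0+h_2\mathbb E[Q_T^2])$. Under $\nu^*$ the uniform second-moment bound makes it vanish as $T\to\infty$, giving $u(q)=J(q;\nu^*)$. For a general admissible $\nu$ the inventory moment may grow, so I would argue as follows. Under the well-posedness condition $b^2<4k\phi$ (exactly the positivity of the discriminant at $\beta=0$ and the joint concavity of $F$), a Young inequality gives $F(\nu,q)\le -c_1\nu^2-c_2 q^2+2\lambda\eta rS_0$ with $c_1,c_2>0$; hence if $J(q;\nu)>-\infty$ then $\int_0^\infty e^{-\beta t}\mathbb E[Q_t^2]\,\mathrm{d}t<\infty$, which forces $\liminf_{T\to\infty}e^{-\beta T}\mathbb E[Q_T^2]=0$. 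Passing to a subsequence $T_n$ realising this liminf makes $\mathbb E[e^{-\beta T_n}u(Q_{T_n})]\to 0$, and since $F$ is bounded above the truncated rewards converge to $J(q;\nu)$, so the displayed inequality yields $u(q)\ge J(q;\nu)$; when $J(q;\nu)=-\infty$ the bound is trivial. Combining the two directions gives $u=\sup_{\nu\in\mathcal A}J(\cdot;\nu)$ with the supremum attained at $\nu^*$. Establishing this liminf/transversality dichotomy rigorously is the only genuinely non-routine ingredient; the remainder is a transcription of the finite-horizon computations with the discount factor in place.
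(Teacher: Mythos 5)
Your proposal is correct, and for the constructive half it coincides with the paper's proof: the same HJB equation from dynamic programming, the same pointwise maximiser~\eqref{sup of control}, the same quadratic system (the paper keeps a linear term $h_1 q$ in the ansatz and shows $h_1=0$; you argue by symmetry that none is needed --- cosmetic), and the same selection of the smaller root, which you justify by $h_2<0$ (concavity) while the paper justifies it by admissibility of the induced feedback control, whose mean-reversion rate $\varrho=\frac{\sqrt{(k\beta-b)^2+4k\phi-b^2}-k\beta}{2k}$ is positive when $2\phi>\beta b$; these are two faces of the same choice. Where you genuinely depart from the paper is the verification half, which the paper compresses into one sentence (``a standard verification argument \dots analogous to Theorem~\ref{verification finite}''). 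The finite-horizon argument does not transfer verbatim: on an infinite horizon one must kill the transversality term $\mathbb E[e^{-\beta T}u(Q_T)]$ for an \emph{arbitrary} admissible control, and nothing in the finite-horizon proof addresses this. Your coercivity-plus-liminf argument --- under $b^2<4k\phi$ one has $F(\nu,q)\le -c_1\nu^2-c_2q^2+C$, so $J(q;\nu)>-\infty$ forces $\int_0^\infty e^{-\beta t}\mathbb E[Q_t^2]\,\mathrm{d}t<\infty$ and hence $\liminf_{T\to\infty}e^{-\beta T}\mathbb E[Q_T^2]=0$ along a subsequence --- supplies exactly the missing step, and it also surfaces a hypothesis the paper never states: if $b^2>4k\phi$ and $k\beta\le b-\sqrt{b^2-4k\phi}$ (a regime permitted by the theorem's stated condition on $\beta$), the feedback $\nu_t=-\frac{b}{2k}Q_t$ is admissible (square-integrable on every finite interval) and yields reward rate $(\frac{b^2}{4k}-\phi)Q_t^2+C$ with $Q_t^2$ growing like $e^{(b/k)t}$, so the discounted value is $+\infty$ and the conclusion fails; your standing assumption is therefore needed, not decorative. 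Two minor blemishes: the uniform-in-time second-moment bound under $\nu^*$ does not follow from ``a Gr\"{o}nwall estimate as in Theorem~\ref{verification finite}'' (that estimate grows with the horizon) but from the negative drift, i.e. $\frac{\mathrm{d}}{\mathrm{d}t}\mathbb E[Q_t^2]\le-2\varrho\,\mathbb E[Q_t^2]+\lambda\big(\mathbb E[|\zeta_1^+|^2]+\mathbb E[|\zeta_1^-|^2]\big)$, which is precisely the bound the paper records; and your identification $\mathbb E[\zeta^2]=\eta^2$ in the constant equation holds only for deterministic jump sizes, though this is the paper's own convention throughout, so it is consistent with the statement you were asked to prove.
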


\begin{proof}
By dynamic programming principle, the associated HJB equation for the control problem~\eqref{discounted infinite model} is 
\begin{equation*}
0 =  - \beta u(q) + \sup_{\nu} \Big\{ - \nu \partial_q u + F(\nu, q) + \lambda \mathbb E\big[u( q + \zeta^+) - u(q) \big]+ \lambda \mathbb E\big[u( q - \zeta^-) - u(q) \big]  \Big\} \, . 
\end{equation*}

Again by using the ansatz $u(q) = h_0 + h_1 q + h_2 q^2$, we obtain the following coupled system
\begin{equation*}
\begin{split}
- \beta h_0 + \frac{1}{4k} h_1^2 + 2 \lambda \eta r S_0  + 2 \lambda \eta^2 h_2 & = 0\, , \\
- \beta h_1 + \frac{1}{2k}(2 h_2 + b) h_1 & = 0 \,, \\
- \beta h_2 + \frac{1}{4k} (2h_2 + b)^2 - \phi & =0   \, .
\end{split}
\end{equation*}

We start with solving the last equation. Since $\beta$ is chosen to satisfy $(k \beta - b)^2 + (4 k \phi - b^2) \geq 0$, there are $2$ real roots of the quadratic equation, i.e. $h_2 = \frac{k \beta - b}{2}  \pm \frac{\sqrt{(k\beta - b)^2 + (4 k \phi  - b^2)}}{2}$. Note that the optimal feedback control is given by the form~\eqref{sup of control}. To guarantee that the optimal control is admissible, we keep $h_2 = \frac{k \beta - b}{2} - \frac{\sqrt{(k \beta - b)^2 + (4 k \phi - b^2)}}{2}$ since, in this case,  
\begin{equation} \label{infinite time optimal control}
\nu^{*}(q) = \frac{-\big(b q + \partial_q u(q) \big)}{2k} = \frac{\sqrt{(k \beta - b)^2 + (4 k \phi - b^2)} - k \beta}{2 k} q \, ,
\end{equation}
by choosing $2 \phi > \beta b$, we have $\varrho := \frac{\sqrt{(k \beta - b)^2 + (4 k \phi - b^2)} - k \beta}{2 k} > 0$. Hence the process $Q_t^{\nu^*}$ given by the dynamics~\eqref{eq:open_positions} under the control $\nu^*$ is square integrable since
$$ \sup_{t \in [0, \infty)} \mathbb E |Q_t^{\nu^*}|^2 \leq q^2 \vee \frac{\lambda (\mathbb E[|\zeta^{+}_1|^2] + \mathbb E[|\zeta^{-}_1|^2]}{2 \varrho} < + \infty \, . $$
Therefore, the Markov optimal control $\nu^*$ given by~\eqref{infinite time optimal control} is square integrable. Clearly, $\nu^*$ is $\mathbb F-$progressively measurable, hence $\nu^* \in \mathcal{A}$. 

We then can uniquely solve that $h_1 = 0$ and $h_0 = \frac{2 \lambda \eta r S_0  + 2 \lambda \eta^2 h_2}{\beta}$. 
Furthermore, by using a standard verification argument, which proceeds analogously to the proof of Theorem~\ref{verification finite} in Section~\ref{proof of prop: solution to the finite model}, we conclude the result. 
\end{proof}

\subsection{Proof of Theorem~\ref{ergodic constant from discounted infinite}} \label{proof theorem: ergodic constant from discounted infinite}

\begin{proof}
We start with $\lim_{\beta \rightarrow 0} \beta v_{\beta}(q)$. We know that $(k \beta - b)^2 + (4 k \phi - b^2) \geq 0$ when $\beta \rightarrow 0$ since $k , \phi \geq 0$. Therefore, from Theorem~\ref{existence of discounted infinite}, we have
\begin{equation*}
\begin{split}
\lim_{\beta \rightarrow 0} \beta v_{\beta}(q) & = \lim_{\beta \rightarrow 0} \beta (h_0 + h_2 q^2) \\
& = \lim_{\beta \rightarrow 0} 2 \lambda \eta r S_0  + 2 \lambda \eta^2 h_2 \\
& = 2 \lambda \eta r S_0  - \lambda \eta^2 b - 2 \lambda \eta^2 \sqrt{k \phi} \, .
\end{split}
\end{equation*}
Let $\hat \gamma = 2 \lambda \eta r S_0  - \lambda \eta^2 b - 2 \lambda \eta^2 \sqrt{k \phi} \in \mathbb R$, hence $\hat \gamma = \lim_{\beta \rightarrow 0} \beta v_{\beta}(q)$. 

On the other hand, from Theorem~\ref{solution to the finite model} for the control problem~\eqref{finite time model}, we have 
\begin{equation*}
v(0,q;T) = 2 \lambda \eta^2 k \ln \frac{\xi   - 1}{\xi  e^{\varphi T} - e^{-\varphi T}} - (\lambda \eta^2 b - 2 r \lambda \eta S_0) T  + (k \varphi  \frac{1 + \xi   e^{2 \varphi T}}{1 - \xi   e^{2 \varphi T}} - \frac{1}{2}b )q^2 \, ,
\end{equation*}
\normalsize
with $\varphi = \sqrt{\frac{\phi}{k}}$ and $\xi   = \frac{\alpha - \frac{1}{2}b + \sqrt{k\phi}}{\alpha - \frac{1}{2}b - \sqrt{k\phi}}$. Therefore, 
\begin{equation*} 
\begin{split}
\lim_{T \rightarrow + \infty}  \frac{1}{T} v(0, q; T) 
& = - 2 \lambda \eta^2 k  \lim_{T \rightarrow + \infty}  \frac{1}{T} \ln(\xi   e^{\varphi T}) + (2 r \lambda \eta S_0 - \lambda \eta^2 b) \\
& = 2 \lambda \eta r S_0 - \lambda \eta^2 b  - 2 \lambda \eta^2 \sqrt{k\phi} \\ 
& = \hat \gamma \, .
\end{split}
\end{equation*}

Finally we need to show that the ergodic constant $\gamma$ given by~\eqref{reduced ergodic control problem} is equal to $\hat \gamma$, which the proof is analogous to that of~\cite[Theorem 5]{cao2024logarithmic} and thus omitted here for brevity.
\end{proof}

\subsection{Proof of Corollary~\ref{ergodic control}} \label{proof prop optimal ergodic control}
\begin{proof}
From Theorem~\ref{existence of discounted infinite}, it is obvious that, for any initial state $q \in \mathbb R$, $\left|\beta v_\beta(q)\right|$ and $\left| v_\beta(q) - v_\beta(0) \right|$ are bounded. Hence, by following the proof in~\cite[Appendix A.3]{cao2024logarithmic}, the ergodic HJB equation for the control problem~\eqref{reduced ergodic control problem} is 
\begin{equation*}
0 =  - \gamma + \sup_{\nu} \Big\{ - \nu \partial_q u + F(\nu, q) + \lambda \mathbb E\big[u( q + \zeta^+) - u(q) \big]+ \lambda \mathbb E\big[u( q - \zeta^-) - u(q) \big]  \Big\} \, ,
\end{equation*}
where $\gamma$ is the ergodic constant given by Theorem~\ref{ergodic constant from discounted infinite} and $u$ is the relative value function uniquely defined up to a constant. 
Again the supremum is attained at $\nu^*$ given by the formula~\eqref{sup of control} and by using the ansatz $u(q) = h_0 + h_1 q + h_2 q^2$, we have the following coupled system
\begin{equation*}
\begin{split}
\lambda \eta^2 b + 2 \lambda \eta^2 \sqrt{ k \phi} + \frac{h_1^2}{k} + 2 \lambda h_2 \eta^2 & = 0 \,, \\
\frac{1}{2k} (2h_2 + b ) h _1 q & = 0 \, ,\\
\big( \frac{1}{4k} (2h_2 + b)^2 - \phi \big)q^2 & = 0 \, .
\end{split}
\end{equation*}
To guarantee that the optimal control is in the set of admissible controls, we keep the root $h_2 = - \sqrt{k \phi} - \frac{1}{2}b$. Hence the optimal ergodic control $\nu^*$ given in the feedback form is 
\begin{equation*}
\nu^* (q) = \frac{-(b(q) + \partial_q u(q))}{2k} = \sqrt{\frac{\phi}{k}} q \, . 
\end{equation*}
To prove $\nu^* \in \mathcal{A}$, it proceeds analogously to the proof in Section~\ref{proof theorem: existence of discounted infinite}. We then
solve $h_1 = 0, h_0 \in \mathbb R$ to the coupled system. Therefore, we can see that the relative ergodic value function $u(q) = h_0 + h_2 q^2$ is unique up to a constant. 
\end{proof}

\raggedright
\bibliographystyle{abbrv}
\bibliography{Bibliography}

\begin{thebibliography}{10}

\bibitem{aave2020whitepaper}
Aave.
\newblock Aave protocol whitepaper, 2020.
\newblock \url{https://github.com/aave/protocol-v2}.

\bibitem{almgren2001optimal}
R.~Almgren and N.~Chriss.
\newblock Optimal execution of portfolio transactions.
\newblock {\em Journal of Risk}, 3:5--40, 2001.

\bibitem{arisawa1998ergodic}
M.~Arisawa and P.-L. Lions.
\newblock On ergodic stochastic control.
\newblock {\em Communications in partial differential equations}, 23(11-12):2187--2217, 1998.

\bibitem{barger2019optimal}
W.~Barger and M.~Lorig.
\newblock Optimal liquidation under stochastic price impact.
\newblock {\em International Journal of Theoretical and Applied Finance}, 22(02):1850059, 2019.

\bibitem{barles2001space}
G.~Barles and P.~E. Souganidis.
\newblock Space-time periodic solutions and long-time behavior of solutions to quasi-linear parabolic equations.
\newblock {\em SIAM Journal on Mathematical Analysis}, 32(6):1311--1323, 2001.

\bibitem{bertimas1999optimal}
D.~Bertimas, A.~W. Lo, and P.~Hummel.
\newblock Optimal control of execution costs for portfolios.
\newblock {\em Computing in Science \& Engineering}, 1(6):40--53, 1999.

\bibitem{cao2024logarithmic}
J.~Cao, D.~{\v{S}}i{\v{s}}ka, L.~Szpruch, and T.~Treetanthiploet.
\newblock Logarithmic regret in the ergodic {A}vellaneda--{S}toikov market making model.
\newblock {\em arXiv preprint arXiv:2409.02025}, 2024.

\bibitem{cartea2016incorporating}
{\'A}.~Cartea and S.~Jaimungal.
\newblock Incorporating order-flow into optimal execution.
\newblock {\em Mathematics and Financial Economics}, 10(3):339--364, 2016.

\bibitem{cartea2015algorithmic}
{\'A}.~Cartea, S.~Jaimungal, and J.~Penalva.
\newblock {\em Algorithmic and high-frequency trading}.
\newblock Cambridge University Press, 2015.

\bibitem{casella2024statistical}
G.~Casella and R.~Berger.
\newblock {\em Statistical inference}.
\newblock CRC Press, 2024.

\bibitem{cohen2023paradox}
S.~N. Cohen, M.~Sabate-Vidales, L.~Szpruch, and M.~Gontier~Delaunay.
\newblock The paradox of adversarial liquidation in decentralised lending.
\newblock {\em Available at SSRN 4540333}, 2023.

\bibitem{compound2019whitepaper}
Compound.
\newblock Compound: The money market protocol, 2019.
\newblock \url{https://compound.finance/documents/Compound.Whitepaper.pdf}.

\bibitem{fujita2006asymptotic}
Y.~Fujita, H.~Ishii, and P.~Loreti.
\newblock Asymptotic solutions of {H}amilton--{J}acobi equations in {E}uclidean n space.
\newblock {\em Indiana University mathematics journal}, pages 1671--1700, 2006.

\bibitem{gueant2016financial}
O.~Gu{\'e}ant.
\newblock {\em The Financial Mathematics of Market Liquidity: From optimal execution to market making}, volume~33.
\newblock CRC Press, 2016.

\bibitem{gueant2020optimal}
O.~Gu{\'e}ant and I.~Manziuk.
\newblock Optimal control on graphs: existence, uniqueness, and long-term behavior.
\newblock {\em ESAIM: Control, Optimisation and Calculus of Variations}, 26:22, 2020.

\bibitem{he2005dynamic}
H.~He and H.~Mamaysky.
\newblock Dynamic trading policies with price impact.
\newblock {\em Journal of Economic Dynamics and Control}, 29(5):891--930, 2005.

\bibitem{huberman2005optimal}
G.~Huberman and W.~Stanzl.
\newblock Optimal liquidity trading.
\newblock {\em Review of finance}, 9(2):165--200, 2005.

\bibitem{ichihara2012large}
N.~Ichihara.
\newblock Large time asymptotic problems for optimal stochastic control with superlinear cost.
\newblock {\em Stochastic Processes and their Applications}, 122(4):1248--1275, 2012.

\bibitem{juliano2018dydx}
A.~Juliano.
\newblock dydx: A standard for decentralized margin trading and derivatives.
\newblock {\em URl: https://whitepaper. dydx. exchange}, 2018.

\bibitem{ley2014large}
O.~Ley and V.~D. Nguyen.
\newblock Large time behavior for some nonlinear degenerate parabolic equations.
\newblock {\em Journal de Math{\'e}matiques Pures et Appliqu{\'e}es}, 102(2):293--314, 2014.

\bibitem{makerdao_whitepaper}
{MakerDAO}.
\newblock Makerdao: The maker protocol whitepaper, 2021.
\newblock Accessed: 2024-10-31.

\bibitem{perez2021liquidations}
D.~Perez, S.~M. Werner, J.~Xu, and B.~Livshits.
\newblock Liquidations: Defi on a knife-edge.
\newblock In {\em Financial Cryptography and Data Security: 25th International Conference, FC 2021, Virtual Event, March 1--5, 2021, Revised Selected Papers, Part II 25}, pages 457--476. Springer, 2021.

\bibitem{vegaprotocol2020whitepaper}
V.~Protocol.
\newblock Vega protocol: A decentralized protocol for creating and trading derivatives on a fully decentralized network, 2020.
\newblock \url{https://vega.xyz/papers/vega-protocol-whitepaper.pdf}.

\bibitem{qin2021empirical}
K.~Qin, L.~Zhou, P.~Gamito, P.~Jovanovic, and A.~Gervais.
\newblock An empirical study of defi liquidations: Incentives, risks, and instabilities.
\newblock In {\em Proceedings of the 21st ACM Internet Measurement Conference}, pages 336--350, 2021.

\bibitem{said2017market}
E.~Said, A.~B.~H. Ayed, A.~Husson, and F.~Abergel.
\newblock Market impact: A systematic study of limit orders.
\newblock {\em Market Microstructure and Liquidity}, 3(03n04):1850008, 2017.

\bibitem{shreve2004stochastic}
S.~E. Shreve et~al.
\newblock {\em Stochastic calculus for finance II: Continuous-time models}, volume~11.
\newblock Springer, 2004.

\bibitem{siska2019margins}
D.~{\v{S}}i{\v{s}}ka.
\newblock Margins and credit risk on {V}ega, 2019.
\newblock \url{https://vega.xyz/papers/margins-and-credit-risk.pdf}.

\end{thebibliography}

\end{document}